\documentclass[11pt]{article}

\usepackage[a4paper,margin=2.5cm]{geometry}
\usepackage[T1]{fontenc}
\usepackage{lmodern}
\usepackage{microtype}
\usepackage{amsmath,amssymb,amsthm}
\usepackage{booktabs}
\usepackage{xspace}
\usepackage{pgfplots}
\usepackage[hidelinks]{hyperref}
\pgfplotsset{compat=1.18}
\usetikzlibrary{arrows.meta,decorations.pathreplacing}

\def\..{\,\mathpunct{\ldotp\ldotp}} 

\newcommand{\PHast}{PHast\xspace}
\newcommand{\PHastP}{PHast$^+$\xspace}
\newcommand{\PHastR}{PHastR\xspace}
\newcommand{\hi}{\operatorname{hi}}
\DeclareMathSymbol{\ll}{\mathbin}{symbols}{"1C}
\DeclareMathSymbol{\gg}{\mathbin}{symbols}{"1D}
\newtheorem{proposition}{Proposition}

\title{\PHastR: Faster Minimal Perfect Hashing\\with Rotated Layouts\thanks{Previous
versions of this paper reported experiments on an Amazon EC2
\texttt{c7i.metal-24xl} instance (Intel Xeon Platinum 8488C). This version
reports experiments on an \texttt{m8a.metal-24xl} instance (AMD EPYC 9R45),
with an implementation that removes some bottlenecks revealed by that
processor.}}
\author{Sebastiano Vigna\\Dipartimento di Informatica, Università degli Studi di Milano}
\date{}

\begin{document}
\bibliographystyle{plain}
\maketitle

\begin{abstract}
We present \PHastR, a minimal perfect hash function of the \PHast family. Like
\PHastP, the fast-to-build variant of \PHast, \PHastR finds feasible seeds with
bit-parallel operations: its seeds move keys along \emph{rings} of slots, and
to evaluate the seeds of a bucket it cyclically shifts, for each key, a word
recording which slots of its ring are used. Unlike \PHastP, however, it is
almost immune to \emph{self-collisions}, which make two keys of a bucket
collide for (almost) all its seeds: more than a third of the buckets bumped by \PHastP
with wrapping and multiplier~$3$ contain one. With 8-bit seeds, \PHastR uses
$1.94$ bits per key, essentially the same space as the best variant of
\PHastP with wrapping, but it is built in
$38\%$ of the time, and its queries are $34\%$ faster on $10^9$ keys, as it
bumps $0.5\%$ of the keys instead of $3.9\%$; they are faster also than those
of \PHast, which takes twelve times as long to build, and than those of
PtrHash, which uses more than one bit per key more. With a larger
expected bucket size
\PHastR comes within $0.02$ bits per key of the space of \PHast, using
$1.90$ bits per key, and its queries are still about as fast. Its
construction also scales better with the number of threads: with all the
threads of a 96-core machine it is two and a half times faster than that of
\PHastP. An offline construction, which keeps the hashes of the keys on
disk and needs less than half a byte per key of memory, built \PHastR on
a trillion keys in $9.3$ hours on a single machine. Finally, we argue that
when each key can be placed only within a range of $W$ slots, every
placement of $n$ random keys on $n$ slots, however found, must bump on
large key sets a constant fraction of the keys: we derive an asymptotic
lower bound of about $0.1/W$ on the expected value of this fraction, and we
prove that, in a Poisson model, the best placement bumps about $1/2W$ of the
keys, which is about one tenth of the keys bumped by \PHastR.
\end{abstract}

\section{Introduction}

A \emph{minimal perfect hash function} (MPHF) maps a static set of $n$ keys
bijectively to $[0\..n)$; it requires at least $\lg e\approx 1.443$
bits per key~\cite{FKSSSFPHF} (we use $\lg$ for the binary logarithm and $\ln$ for the natural
logarithm). Among the many practical constructions, \emph{bucket placement}
gives the fastest queries: keys are hashed to buckets, and a per-bucket seed
chooses for each key of the bucket a \emph{slot} of $[0\..n)$, so that no two
keys collide.
The idea goes back to FCH~\cite{FCHFACMPHF} and to Pagh's
hash-and-displace~\cite{PagHDEEMPHF}, and it has been refined by many
constructions since (Section~\ref{sec:related}). The main goal of these
constructions is not to approach the lower bound, as those based on recursive
splitting do at the price of slower queries, but to obtain a practical
structure that answers a query with close to one memory access on average.

\PHast~\cite{BeSPHast,BeSPHastALENEX} pushes this approach to its limit: the seeds are
stored in a plain array of fixed-width values, each seed places the keys of
its bucket inside small overlapping slices of the output range, and a query
reads a single seed for almost all keys. Finding the seeds, however, is
expensive, as \PHast tries all the seeds of each bucket: construction takes
hundreds of nanoseconds per key. Its variant \PHastP finds feasible seeds
with bit-parallel operations, and it is an order of magnitude faster to
build, but with 8-bit seeds it needs $2.115$ bits per key instead of $1.885$.
Its variants \emph{with wrapping}, in which keys wrap around the end of their
slices, need less space ($1.939$ bits per key for the best one, that is, the
one using the least space), but they are slower to build.

In this paper we present \PHastR, a structure of the same family that, with
8-bit seeds, has almost the space of \PHast, faster queries, and a
construction faster than that of the variants of \PHastP with wrapping and,
with several threads, than that of \PHastP itself.

The idea behind \PHastR lies between \PHast and \PHastP. In \PHastP a seed
moves all the keys of a bucket by the same amount: this makes it possible to
evaluate many seeds at once with bit-parallel operations, but two keys of a
bucket that collide for one seed collide for (almost) all seeds. In \PHast,
instead, each seed places the keys independently, but seeds must be tried
one at a time. In \PHastR the lowest bits of a seed choose one of a few
independent \emph{layouts}, each assigning an offset to every key, and the
remaining bits choose a \emph{rotation}, which moves all the keys of the
bucket by the same amount along their \emph{rings} of slots: all the rotations of a layout can be evaluated at
once, as in \PHastP, but two keys collide for (almost) all seeds only if they
do so in every layout, which is very unlikely. Thus, \PHastR bumps far fewer
keys than \PHastP, and since, once the seed width and the expected bucket
size are fixed, bumping is where both the excess space and the slow queries
come from, it gets almost the space of \PHast with faster queries; choosing
a layout requires just two shifts.

This paper makes four contributions:
\begin{itemize}
\item \PHastR and its bit-parallel search, which finds all the rotations of a
layout that map no key to a used slot by cyclically shifting, for each key,
the bits recording which slots of its ring are used
(Section~\ref{sec:phastr});
\item a comparison with \PHast and \PHastP, showing that a large part of the
bumping of \PHastP is caused by self-collisions inherent to its additive
scheme, and that layouts make their effect negligible
(Section~\ref{sec:differences});
\item an implementation in the Rust crate
\texttt{sux}\footnote{\url{https://crates.io/crates/sux}}, an experimental
comparison with the reference implementation of \PHast and \PHastP and with
PtrHash (Section~\ref{sec:experiments}), and an offline construction that reads the
keys just once, needs on large key sets less than half a byte per key of
memory, and builds the same
structure as the construction in memory (Section~\ref{sec:offline});
\item an analysis of the number of keys that every placement must bump when
each key can be placed only within a range of $W$ slots: on large key sets
we derive an asymptotic lower bound of about $0.1/W$ on the expected fraction
of bumped keys, and we prove that, in a Poisson model, the best placement
bumps about $1/2W$ of the keys, so no
search can produce a first level free of bumping, but \PHastR bumps about ten times as many keys as this constraint
forces (Section~\ref{sec:limits}).
\end{itemize}

\section{Related Work}
\label{sec:related}

We refer the reader to the survey by Lehmann et
al.~\cite{LMPPSVWMMPHS} for a complete account of the field, and recall here
the approaches that are relevant to this paper. MPHFs based on
\emph{fingerprinting}, such as BBHash~\cite{LRCPFSMPHMKS} and
FMPH~\cite{BelFBMPHR}, map the keys to bit arrays of decreasing size, one
level at a time: keys that collide in one array are \emph{bumped} to the next array;
they are fast to build, but they use more than two bits per
key. At the other end, MPHFs based on \emph{recursive splitting}, such as
RecSplit~\cite{EGVRMPHRS} and its parallel implementation~\cite{BKLSHPCRBMPHF},
ShockHash~\cite{LSWSTOSMPHBBF}, and CONSENSUS-RecSplit~\cite{LSWCSESAMPH}, get
close to the lower bound of $\lg e$ bits per key, at the price of slower
queries; SicHash~\cite{LSWSSICTPH} combines small irregular cuckoo hash
tables with retrieval data structures.

\emph{Bucket placement} goes back to FCH~\cite{FCHFACMPHF}, which, however,
was slow to build unless its functions used a lot of space, and to Pagh's
hash-and-displace~\cite{PagHDEEMPHF}, in which each bucket stores a displacement
that moves all its keys by the same amount: two keys of a bucket that collide
for one displacement thus collide for all of them, and the scheme requires the
pair of hash functions locating the keys to be injective on the key set.
CHD~\cite{BBDHDC} brought the idea back: the seed of a bucket selects a function
from a family of random functions (in practice, linear combinations of two hash
functions of the key with coefficients depending on the seed), so that, in the
analysis, different seeds place the keys of a bucket independently; seeds are tried one at a time,
in order, and then compressed. PTHash~\cite{PTPRFMPH} revisited FCH with a
careful engineering of the encoding of the seeds (its \emph{pilots}) and of the
distribution of the keys into buckets, revitalizing the field; it was followed by
PHOBIC~\cite{HLPPPHOBSIC}, which optimizes the bucket sizes and interleaves the
encoding of the pilots, and by PtrHash~\cite{RagPTR}, which stores the pilots in
bytes and replaces the greedy placement of the buckets with a cuckoo-like search
that evicts already placed buckets, obtaining very fast queries. The
extension of PtrHash to non-minimal $k$-perfect hashing~\cite{GHSWNMKPHTLBAFSHT}
uses a variant of hash-and-displace: the highest bit of the seed of a bucket
chooses between two positions of each key, computed by a further hash
function, and the remaining bits move all the keys of the bucket by the same
number of bins, so that, as in \PHastP, the displacements can be filtered
with bitmasks. Finally, \PHast~\cite{BeSPHast,BeSPHastALENEX} combines fixed-width seeds and slices with bumping: buckets
that cannot be placed are left to the following levels, so that the search never
backtracks.

\PHastR inherits the structure of \PHast, and its seeds lie between two
extremes: the additive displacements of hash-and-displace and \PHastP, and
the independent functions of \PHast and of the analysis of CHD. Like the
linear combinations used in practice by CHD, a seed
both changes the relative positions of the keys of a bucket (by choosing a
layout) and moves them all by the same amount (by choosing a rotation), but
the few layouts are obtained for free from the computation of the bucket,
and all the rotations of a layout are evaluated at once with bit-parallel
operations.

\section{\PHastR}
\label{sec:phastr}

\PHastR follows the scheme of \PHast~\cite{BeSPHast,BeSPHastALENEX}; we discuss the
differences in Section~\ref{sec:differences}. Each key has a $64$-bit hash,
which determines its \emph{bucket}, among $B$ buckets, and its \emph{slice},
an interval of $L$ consecutive slots of the \emph{output range} $[0\..m)$
whose start grows linearly with the hash. Each bucket has a seed, stored in a
plain array of $S$-bit values; the seed maps each key of the bucket to a slot
of its slice. A slot is \emph{used} if a key has already been mapped to it, and
\emph{free} otherwise; a seed is \emph{feasible} for a bucket if it maps its
keys to distinct free slots.

The structure is a sequence of \emph{levels}
using the same scheme: buckets for which no seed is feasible are
\emph{bumped} (seed~$0$), and their keys are handled by the following
levels. The first level has $m=n$, so every bumped key leaves a
free slot (a \emph{hole}) in the first level, and an Elias--Fano
sequence~\cite{EliESRCASF,FanNBRISM,VigBIRSQ}, the \emph{remapping}, maps the slots of
the following levels, in order, to the holes. Since bumped keys are few, a query reads a
single seed for almost all keys. Seeds are assigned by a \emph{sweep} over
the buckets: buckets enter, in order of index, the \emph{window}, a priority
queue that favors large buckets and holds the buckets not yet placed among
the $256$ starting from the first bucket not yet placed; the sweep
repeatedly extracts the bucket of highest priority and gives it the feasible
seed minimizing the product of the distances of the slots of its keys from
a point slightly before the slices of the bucket.

Each hole costs about $2+\lg(1/\beta)$ bits in the remapping,
where $\beta$ is the fraction of bumped keys, and the bumped keys must be
placed again by the following levels, which in turn bump a fraction $\beta$
of their keys, so that overall the levels receive about
$n(1+\beta+\beta^2+\cdots)=n/(1-\beta)$ keys. With $S$-bit seeds and expected
bucket size $\lambda=n/B$ the seeds take $S/\lambda$ bits for each key a level
receives,
and the space in bits per key is thus approximately
\begin{equation}
\label{eq:space}
\frac{S}{\lambda(1-\beta)} + \beta\Bigl(2+\lg\frac1\beta\Bigr),
\end{equation}
and since the entropy of a random set of $\beta n$ holes is already about
$\beta n(\lg(1/\beta)+\lg e)$, the encoding of holes cannot be
substantially improved: once $S$ and $\lambda$ are fixed, \emph{the only
lever is $\beta$}. We will show in Section~\ref{sec:limits} that on large key
sets no placement, however found, can avoid bumping altogether; the purpose of
the design of \PHastR is to bump few keys while finding feasible seeds with
bit-parallel operations.

\begin{figure}[t]
\centering
\begin{tikzpicture}[x=4mm,y=4mm,font=\footnotesize,>={Latex[length=1.3mm]}]
\node[left] at (-.2,8) {buckets};
\fill[gray!35] (9,7.6) rectangle ++(3,.8);
\foreach \i in {0,...,9} \draw (3*\i,7.6) rectangle ++(3,.8);
\foreach \i/\l in {2/b-1,3/b,4/b+1,5/b+2} \node[font=\scriptsize] at (3*\i+1.5,8) {$\l$};
\draw[decorate,decoration={brace,amplitude=4pt}] (9,8.6) -- (25,8.6)
  node[midway,above=4pt,font=\scriptsize] {buckets whose keys can reach slot $p$};
\node[left] at (-.2,6) {slots};
\fill[gray!35] (24,5.6) rectangle ++(1,.8);
\foreach \i in {0,...,31} \draw (\i,5.6) rectangle ++(1,.8);
\node[below,font=\scriptsize] at (24.5,5.6) {$p$};
\draw[dashed] (9,7.6) -- (9,6.4) (12,7.6) -- (12,6.4);
\draw[->] (4.5,4.6) node[below,font=\scriptsize] {origin of $b$} -- (4.5,5.6);
\foreach \s/\y/\k in {9/3.6/1,10/2.8/2,11/2.0/3} {
  \fill[gray!12] (\s,\y) rectangle ++(16,.5);
  \draw (\s,\y) rectangle ++(16,.5);
  \node[left,font=\scriptsize] at (8.9,\y+.25) {$x_\k$};
}
\foreach \p in {11,15,19,23} \fill (\p+.5,3.85) circle (1.1mm);
\draw (15.5,3.85) circle (1.9mm);
\draw[<->] (15.5,4.6) -- node[above,font=\scriptsize] {$T$} (19.5,4.6);
\end{tikzpicture}
\caption{\label{fig:geometry}Buckets, slices and strides in a level with
output range $[0\..m)$, $m\approx n$, expected bucket size $\lambda=3$,
slices of length $L=16$, and stride $T=4$. The top line shows some
consecutive buckets, the middle line the slots of the output range, and the
bars at the bottom the slices of the keys $x_1$, $x_2$, $x_3$ of
bucket~$b$ (gray). As both the bucket of a key and the start of its slice
grow linearly with its hash, each bucket corresponds to an interval of about
$\lambda$ slice starts (the dashed lines delimit that of bucket~$b$), and
the slices of the keys of bucket~$b$ start a few slots apart. In a layout,
a key moves along its ring, the slots of its slice whose distance from its
base slot (circled) is a multiple of~$T$ (the dots, for $x_1$). The seed of
bucket~$b$ is chosen using the distances of the slots of its keys from its
origin, which precedes its leftmost slice ($95$ slots before in our
implementation). Only the keys of the about $L/\lambda$ buckets whose
slices can contain a slot~$p$ (gray) can be placed in it.}
\end{figure}

\begin{figure}[t]
\centering
\begin{tikzpicture}[x=5.4mm,y=5.4mm,font=\footnotesize,
  p0/.style={circle,draw,fill=gray!35,inner sep=0pt,minimum size=3.9mm,font=\scriptsize},
  p1/.style={circle,draw,fill=white,inner sep=0pt,minimum size=3.9mm,font=\scriptsize}]
\foreach \y/\k in {5.2/1,3.8/2,2.4/3} {
  \node[left] at (-.2,\y+.5) {$x_\k$};
  \foreach \i in {0,...,15} \draw[gray!60] (\i,\y) rectangle ++(1,1);
}
\foreach \i in {0,4,8,12,15} \node[above,font=\tiny] at (\i+.5,6.2) {$\i$};
\foreach \p/\s in {10/2,14/4,2/6} \node[p0] at (\p+.5,5.7) {$\s$};
\foreach \p/\s in {15/1,3/3,7/5,11/7} \node[p1] at (\p+.5,5.7) {$\s$};
\foreach \p/\s in {10/2,14/4,2/6} \node[p0] at (\p+.5,4.3) {$\s$};
\foreach \p/\s in {5/1,9/3,13/5,1/7} \node[p1] at (\p+.5,4.3) {$\s$};
\foreach \p/\s in {5/2,9/4,13/6} \node[p0] at (\p+.5,2.9) {$\s$};
\foreach \p/\s in {10/1,14/3,2/5,6/7} \node[p1] at (\p+.5,2.9) {$\s$};
\node[left] at (-.2,.5) {used};
\foreach \i in {0,3,4,6,8,10,12,15} \fill[gray!40] (\i,0) rectangle ++(1,1);
\foreach \i in {0,...,15} \draw (\i,0) rectangle ++(1,1);
\foreach \p/\k in {7/1,13/2,2/3}
  \draw[red!70!black,thick,fill=red!25] (\p,0) rectangle ++(1,1)
    node[midway,font=\tiny,text=red!70!black] {$x_\k$};
\foreach \p/\y in {7/5.7,13/4.3,2/2.9}
  \draw[red!70!black,thick] (\p+.5,\y) circle (2.4mm);
\end{tikzpicture}
\caption{\label{fig:placement}The slots of the three keys $x_1$, $x_2$, $x_3$
of a bucket for each seed, with $3$-bit seeds, $R=2$ layouts, and slices of
length $L=16$ (for simplicity, the three slices coincide). Each of the top
three lines shows the slice of a key, with the positions of some slots
above, and a circle containing a seed~$s$ marks the slot to which $s$ maps
the key. Even seeds (gray circles) choose layout $0$, odd seeds (white
circles) layout $1$: seed $s=2j+r$ chooses layout $r$ and rotation $j$,
which moves the keys by $jT$ slots from their base slots, where $T=4$ is the
stride, so in each layout a key goes through the four slots of its ring
(only three of them in layout $0$, as seed $0$ is not available). In layout
$0$ the keys $x_1$ and $x_2$ have the same base slot, so they collide for
all even seeds; in layout~$1$ they do not. The bottom line shows the slots
that are already used (gray): the only feasible seed is $5$, which maps the
keys to the free slots marked in red.}
\end{figure}

\paragraph{Placement scheme.}
We denote with $\hi(x)$ and $\operatorname{lo}(x)$ the upper and lower $64$
bits of a $128$-bit product. Consider a level with output range $[0\..m)$ and
$B$ buckets, whose slices have length $L=2^\ell$ (slice lengths are powers of
two; we discuss their choice below). Let $h$ be the hash of a key, and let
$\alpha=h/2^{64}\in[0\..1)$. The bucket of the key is $\lfloor\alpha
B\rfloor$, which in fixed-point arithmetic is just $\hi(hB)$, and the start of
its slice is $\sigma(h)=\lfloor\alpha(m-L+1)\rfloor$, which is
$\hi(h(m-L+1))$, so that slices lie within the output range
(Figure~\ref{fig:geometry}).\footnote{The assignment of buckets by a fixed-point
multiplication, which makes each bucket an interval of hashes, was introduced
by Genuzio, Ottaviano and Vigna for the offline bucketing of
GOV~\cite{GOVFSCCSMPHF}.} Moreover, there
are $R$ \emph{layouts}, where $R=2^\rho$ is a power of two, each assigning
an in-slice offset to every key:
letting $o=\operatorname{lo}(hB)$, we divide $o$ into $R$ fields of
$f=64/R$ bits, and layout $r\in[0\..R)$ assigns to the key the offset
\[
\omega_r(o)=(o\gg rf) \bmod L,
\]
that is, the $\ell$ lowest bits of field $r$. We require $\ell\le f$, so
that the offsets of different layouts use disjoint bits of $o$.

The seed $s\in\bigl[1\..2^S\bigr)$ of a bucket has two parts: its lowest $\rho$ bits,
$r=s\bmod R$, choose a layout, and its remaining bits,
$j=s\gg\rho\in\bigl[0\..2^{S-\rho}\bigr)$, a \emph{rotation}. Assuming for the
time being that $\ell\geq S$, let $u=2^{\ell-S}$: the seed maps each key of
the bucket to the slot
\[
\sigma(h)+\bigl(\omega_r(o)+su\bigr)\bmod L
=\sigma(h)+\bigl(\omega_r(o)+ru+jT\bigr)\bmod L,
\]
where $T=Ru$ is the \emph{stride}; the two forms are equal because $s=jR+r$,
and queries compute the first one.

In words: once the layout is chosen, each
key of the bucket has a \emph{base slot}, the slot for rotation $0$ (as
offsets are uniform, the constant $ru$ is immaterial), and rotation~$j$ moves
all the keys of the bucket by $jT$ slots, wrapping around the end of their
slices. As the rotation varies, a key goes
exactly once through the $2^{S-\rho}=L/T$ slots of its slice whose distance
from its base slot is a multiple of $T$: we call these slots its \emph{ring}
in layout~$r$ (Figure~\ref{fig:placement}). Seed $0$, that is, layout $0$
with rotation $0$, marks bumped buckets, so layout $0$ has one rotation less.

Two keys of a bucket with the same base slot in a layout move together,
so they collide for all rotations but the few for which exactly one of them
wraps around (the slices of the keys of a bucket can start a few slots
apart): this is a \emph{self-collision}. Since the offsets of different
layouts use disjoint bits of $o$, self-collisions happen
independently in each layout: a bucket of size $k$ has a self-collision in
every layout with probability about $(k(k-1)/2L)^R$, which is negligible
already for $R=2$. Moreover, the seeds of a layout move all the keys of the
bucket by the same amount, so they are strongly correlated, whereas seeds of
different layouts are (almost) independent.

The value $o$ costs nothing, as it is the other half of the product that
yields the bucket, and it is a good source of offsets: for the keys of bucket
$b$ we have $hB=b\cdot2^{64}+o$, so $o/2^{64}=\alpha B-b$ is the fractional
part of $\alpha B$, which is uniformly distributed if $h$ is. Moreover, the
offsets of the keys of a bucket are independent of each other for every
layout, which would not be true of offsets taken from the upper bits of $h$,
as the keys of a bucket share the upper ${\approx}\lg B$ bits of
$h$.

We require the number $B$ of buckets to be odd: if $B$ is even, the
lowest bit of $o$, and thus of the offset of layout~$0$, is always zero,
whereas if $B$ is odd the lowest bits of $o$ are a bijective function of the
lowest bits of $h$.

\paragraph{Queries.}
A query computes $hB$, reads the seed $s$ of bucket $\hi(hB)$, and, unless
$s=0$, computes the first form of the formula above. There is no need to
extract $r$: since $sf=(jR+r)f=64j+rf$, we have $sf\bmod 64=rf$, and the
query computes
\[
\sigma(h)+\bigl((o\gg(sf\bmod 64))+su\bigr)\bmod L,
\]
where a single reduction modulo $L$ both extracts the offset $\omega_r(o)$
from the shifted $o$ and wraps the sum around the end of the slice.

\begin{figure}[t]
\centering
\begin{tikzpicture}[x=5.4mm,y=5.4mm,font=\footnotesize,>={Latex[length=1.6mm]}]
\def\bits#1#2#3#4{
  \node[left] at (#2-.2,#3+.5) {#1};
  \foreach \v [count=\d from 0] in {#4} {
    \ifnum\v=1 \fill[gray!40] (#2+\d,#3) rectangle ++(1,1); \fi
    \draw (#2+\d,#3) rectangle ++(1,1);
    \node[font=\scriptsize] at (#2+\d+.5,#3+.5) {$\v$};
  }}
\node[anchor=west] at (-3.4,6.4) {(a)};
\node[font=\tiny] at (2,5.6) {$q$};
\foreach \d in {0,...,3} \node[font=\tiny] at (\d+.5,4.9) {$\d$};
\bits{row $0$}{0}{3.6}{1,1,1,1}
\bits{row $1$}{0}{2.4}{0,0,0,0}
\bits{row $2$}{0}{1.2}{0,1,1,0}
\bits{row $3$}{0}{0}{1,0,0,1}
\foreach \y/\l in {3.6/{0, 4, 8, 12},2.4/{1, 5, 9, 13},1.2/{2, 6, 10, 14},0/{3, 7, 11, 15}}
  \node[right,font=\tiny] at (4.2,\y+.5) {slots \l};
\begin{scope}[xshift=76mm]
\node[anchor=west] at (-4.4,6.4) {(b)};
\node[font=\tiny] at (2,5.6) {seed};
\foreach \s [count=\d from 0] in {1,3,5,7} \node[font=\tiny] at (\d+.5,4.9) {$\s$};
\bits{$x_1$}{0}{3.6}{1,1,0,0}
\bits{$x_2$}{0}{2.4}{0,0,0,0}
\bits{$x_3$}{0}{1.2}{1,0,0,1}
\bits{\textsc{or}}{0}{-.3}{1,1,0,1}
\draw[red!70!black,thick,fill=red!25] (2,-.3) rectangle ++(1,1)
  node[midway,font=\tiny,text=red!70!black] {$0$};
\foreach \y/\l in {3.6/{row $3$ shifted by $3$},2.4/{row $1$ shifted by $1$},1.2/{row $2$ shifted by $2$}}
  \node[right,font=\tiny] at (4.2,\y+.5) {\l};
\end{scope}
\end{tikzpicture}
\caption{\label{fig:search}Finding the free rotations of layout $1$ for the
bucket of Figure~\ref{fig:placement}. (a) The set of used slots of
Figure~\ref{fig:placement} is stored by residue classes modulo the stride
$T=4$: bit $q$ of row $\tau$ records whether slot $qT+\tau$ is used (the
labels on the right list the slots of each row). (b) The bits of the ring
of a key form a row of (a) (in general, $L/T$ consecutive bits of a row),
and the rotations go through them cyclically, starting from the ring
position of the key: for example, the base slot of $x_1$ is slot $15$
(seed $1$), which is slot $3$ of its ring (row $3$), so the ring position of
$x_1$ is $3$. Each line shows the bits of the ring of a key cyclically
shifted by its ring position (the labels on the right give the row and the
shift), so that column $j$ corresponds to rotation $j$, that is, to seed
$2j+1$. These bits are combined by a bitwise \textsc{or}: the zeros of the
result are the free rotations (here, only rotation $2$, that is, seed $5$,
in red).}
\end{figure}

\paragraph{Search.}
In a bit vector marking the used slots, the bits of the slots of a ring would be
$T$ bits apart. We make them adjacent by storing the set of used slots by
residue classes modulo $T$, as suggested by Beling and Sanders for \PHastP
with wrapping~\cite{BeSPHast,BeSPHastALENEX}, that is, as $T$ \emph{rows} in which bit $q$ of
row $\tau$ records whether slot $qT+\tau$ is used. All the slots of a
ring have the same residue modulo $T$ (wrapping around the end of a slice
does not change it, as $T$ divides $L$), so the \emph{bits of the ring} of a
key, which record whether the slots of its ring are used, are $L/T$
consecutive bits of a row. We number the slots of a ring from $0$ in order
of position in the slice, and we call \emph{ring position} of a key the
number $y$ of its base slot: rotation $j$ maps the key to slot
$(y+j)\bmod(L/T)$ of its ring. Thus, cyclically shifting the bits of the
ring so that bit $y$ becomes bit $0$,
we obtain $L/T$ bits whose bit $j$ tells whether rotation $j$ maps the key to
a used slot, and combining them for all the keys of the bucket by a bitwise
\textsc{or} we obtain $L/T$ bits whose zeros are the \emph{free} rotations,
which map every key to a free slot (Figure~\ref{fig:search}). A free rotation
$j$ yields the feasible seed $jR+r$ unless two keys of the bucket are mapped
to the same slot, which is rare. For $S=8$ and $R=4$ a ring has
$2^{S-\rho}=64$ slots whatever the slice length, except on small levels
(see below), so its bits fill a word:
each layout requires just reading and cyclically shifting a word for each
key.

Among the free rotations of all layouts we choose, as does the current version
of the reference implementation of \PHast~\cite{BelBRLPFSDS}, the one
minimizing the product of the distances of the slots of the keys from the
\emph{origin} of the bucket, which lies $95$ slots before the start of the
leftmost slice that a key of the bucket can have
(Figure~\ref{fig:geometry}). We
compute the logarithm of the product, that is, the sum of the base-2
logarithms of the distances, which we read, in fixed point with $24$
fractional bits, from a table: since it is a sum of integers, the cost of a
rotation does not depend on the order of the keys, which will be essential
for the offline construction of Section~\ref{sec:offline}. 

Evaluating a free
rotation thus requires a table lookup for each key of the bucket. Ties are
broken in favor of the first layout and, within a layout, of the first
rotation. When a bucket has at
most $32$ free rotations, which is the common case, we evaluate all of them,
with a single loop over the free rotations of all layouts that contains no
other branch. When there are more free rotations (e.g., at the beginning of a
sweep) we use instead the fact that in a layout the cost increases with the
rotation, except at the rotations at which some key wraps around: thus, for a
bucket of size $k$ we need to consider only the first free rotation and, for
each key, the first free rotation not smaller than the one at which the key
wraps around, that is, at most $k+1$ free rotations per layout. Finally, when
marking the slots of the best free rotation as used we check that no two keys
of the bucket are mapped to the same slot, and otherwise we move to the next
best one.

\paragraph{Grouping and scheduling.}
The sweep needs the hashes of the keys of each bucket, but in no particular
order, and we do not want to keep more than one copy of them. Thus, we group
the buckets into at most $256$ \emph{parts}, each made of a power-of-two
number of consecutive buckets, and we split the input, that is, the sequence
of keys, into \emph{batches}, one for each thread. There is a \emph{region}
for each pair of a batch and a part: a thread computes the hashes of the keys
of its batch, and appends each hash to the region associated with the batch
and with the part containing the bucket of the hash, recording the part in a
byte. The hashes in a region are thus in the order of the keys. Regions are
sized before computing the hashes, assuming that hashes are uniform, with a
slack of six standard deviations: the few hashes that do not fit their region,
if any, are kept separately.

A part is distributed into buckets only when a sweep needs it, into a buffer
of the sweep (the part is distributed first into smaller parts that fit the
cache, and then each smaller part into its buckets), so regions are never
modified. After the sweep, the keys of bumped buckets, which must be passed to
the next level, are found by walking again the keys of each batch in order:
the byte of a key gives its part, its hash is the next unread hash of the
region of the batch for that part, and a bit vector with a bit per bucket
tells whether the bucket of the hash has been bumped. Since a batch reads up
to $256$ regions at the same time, too many sequential streams for the
hardware prefetchers, the walk prefetches each region two cache lines ahead.
Writing to at most $256$
regions at a time and working on smaller parts of about $150$\,KB is
efficient on all the machines we tested, and the grouping needs, besides the
hashes, little more than a byte per key: in our experiments, the whole
construction needs about $11$ bytes per key, excluding the keys
(Table~\ref{tab:threads}).

The priority of a bucket has the same form as in \PHast: a weight depending
on the size of the bucket, minus $1024$ times the index of the bucket. Since buckets enter the window in
order of index, the buckets of each size in it are in order of index, and the
bucket with the highest priority is the first bucket of some size. Thus,
instead of a priority queue we keep a first-in first-out queue for each size
(buckets with $64$ keys or more, which are extremely rare, are kept in a
heap): extracting the next bucket
requires scanning the first elements of a few queues, and adding a bucket
takes constant time.

The set of used slots of the level, which is needed to build the remapping,
is a byproduct of the sweep: the sweep keeps in a cyclic
buffer only the bits of the slots that the buckets still to be placed can
reach, and copies the bits of the slots that are no longer reachable to the
bit vector of the level when their space in the buffer is recycled.

\paragraph{Slice length.}
The slice length of a level with output range $[0\..m)$ is the smallest power
of two larger than $m/2$, up to a maximum, which is a parameter: we use
$L=1024$ for $S=8$ and $L=2048$ for $S=10$. The
slices of small levels can thus have fewer slots than there are seeds, that
is, $\ell<S$: in this case we let $u=1$, so the stride is $T=R$, rings have
$L/R$ slots (a single slot if $L<R$), and rotations $j$ and $j'$ map the
keys of a bucket to the same slots whenever $L$ divides $(j-j')R$.

\paragraph{Levels.}
The following levels use the same placement scheme with output range equal
to their number of keys, and the remapping maps their slots to the holes of
the first level. A bumped key is hashed again just once, with a different
hash seed: if $h'$ is this \emph{second hash}, the hash of the key at level
$i>1$, which plays the role of $h$ at that level, is
\[
h_i=\hi(Y)\oplus\operatorname{lo}(Y),\quad\text{where $Y=(h\oplus z_i)(h'\oplus c)$},
\]
$z_i$ is a \emph{salt} associated with the level, and $c$ is a constant (we
let $h_1=h$). Thus, the following levels depend only on $h$ and $h'$: once $h'$ has
been computed the key is no longer needed, and the code accessing the
following levels, which takes just two integers, can be kept separate from
that accessing the first level.

\paragraph{Colliding hashes.}
Keys with the same $64$-bit hash at the first level have the same slice and
the same offsets in all layouts, so they collide for every seed and their
bucket is bumped; at the following levels their hashes depend also on the
second hash, and thus they are separated. Since about $n^2/2^{65}$ pairs of
keys have the same hash (about three for $n=10^{10}$), the keys bumped in this
way are negligible with respect to the $\beta n$ keys that are bumped anyway.
Hence, $64$-bit hashes are sufficient for any practical number of keys.
Construction would fail only if two keys had the same pair of hashes, which
happens in practice only for duplicate keys. 

Since duplicate keys are always
bumped from the first level, to report them we examine the buckets of the
second level: as two distinct keys might have the same hash at that level,
keys of a bucket with the same hash are hashed again with a third hash seed,
and only keys whose hashes coincide again are considered duplicates.

\paragraph{Last level.}
When at most $4096$ keys are left, a last level with $n'$ keys uses the same
placement scheme with an output range of $\lfloor 5n'/4\rfloor+16$ slots, expected
bucket size $\min(\lambda,3)$, and
no bumping, changing the salt (and, every eight
attempts, enlarging the output range) in the unlikely case of failure.

\paragraph{Parallel construction.}
As in \PHast, buckets are split into chunks (by default, one for each
thread, provided that each chunk has at least $\max(4096,64g)$ buckets, where
$g$ is the length of a gap) separated by gaps of
\[
\left\lfloor \frac{LB}{m-L+1}\right\rfloor+1
\]
buckets: the slices of the keys of bucket $b$ start at about $b(m-L+1)/B$, and the slot
of a key lies at most $L-1$ slots after the start of its slice, so
(considering the rounding of slice starts) the keys of different chunks
cannot reach the same slots (in Figure~\ref{fig:geometry}, the brace spans
about as many buckets as a gap). Chunks are swept in parallel, and then all gaps
are swept in parallel, each with a new sweep in which the slots used by the
keys of the neighboring chunks (on both sides) are marked as used.

The structure built depends on the chunks, and thus, by default, on the
number of threads. To build the same structure with any number of threads,
the number of chunks can be fixed, and the chunks are then processed by the
available threads. Since each additional chunk adds a gap, whose buckets must
avoid the slots used by the chunks on both sides, more chunks make
construction slightly slower and the structure slightly larger.

\section{Differences from \PHast and \PHastP}
\label{sec:differences}

In \PHast~\cite{BeSPHast,BeSPHastALENEX}, in its variant \PHastP, and in \PHastR the bucket
of a key with hash $h$ is $\hi(hB)$, and the seed $s$ of the bucket maps the
key to slot $\sigma(h)+p(s,h)$, where $\sigma(h)$ is the start of its slice:
the three structures differ mainly in their placement scheme, that is, in the
function $p$. In \PHast, $p$ is a pseudorandom function, so finding the best
seed requires trying all seeds. \PHastP uses instead the \emph{additive
scheme} $p(s,h)=(h\bmod L)+s-1$, so that feasible seeds can be found by
combining, for each key, a word of the bit vector of used slots, but it needs
more space than \PHast; \PHastP
\emph{with wrapping} uses $p(s,h)=(h+\delta s)\bmod L$ for a small multiplier
$\delta$, so that keys do not leave their slice, and needs less space than
\PHastP without wrapping, but it is slower to build.

In \PHastR,
$p(s,h)=\bigl(\omega_r(o)+su\bigr)\bmod L$, where $r=s\bmod R$ and
$o=\operatorname{lo}(hB)$. Within a layout the scheme of \PHastR is thus
additive, like that of \PHastP with wrapping, so the free rotations of a
layout can be found with bit-parallel operations. In an additive scheme,
however, two keys of a bucket that collide for one seed collide for (almost)
all seeds: in the first level of the reference implementation of \PHastP with
8-bit seeds, wrapping, $\delta=3$, and $\lambda=5$, on $10^7$ keys, $35\%$
of the bumped buckets contain
such a self-collision. In \PHastR, instead, self-collisions happen
independently in each layout, and buckets with a self-collision in every
layout are negligible.

The construction of \PHastR differs from that of the reference implementation
also in several engineering aspects (Section~\ref{sec:phastr}): hashes are
distributed into parts rather than sorted, the priority queue is replaced by
a queue for each bucket size, bumped keys are hashed again just once rather
than at each level, the last level uses the same scheme as the other levels
rather than the pseudorandom scheme of \PHast, and the gaps between the
chunks of a parallel construction are swept in parallel, rather than serially. Slice lengths are those of \PHastP
with wrapping, but the expected bucket size is smaller, and the weights of
the bucket priorities are tuned for \PHastR.

\section{Experiments}
\label{sec:experiments}

We ran all timing experiments, except that on $10^{12}$ keys of
Section~\ref{sec:offline}, on an Amazon EC2 \texttt{m8a.metal-24xl} instance
(one AMD EPYC 9R45 with 96 cores, 384\,MiB of L3 cache, 32\,MiB for each
group of eight cores, and 384\,GiB of RAM), compiling with Rust 1.99 at optimization level 3, with thin
link-time optimization, a single code-generation unit, and code generation
for the native processor (\texttt{-C target-cpu=native}); the code of the
experiments is available at \url{https://github.com/vigna/phastr-bench}.

Keys are distinct
64-bit integers, obtained by multiplying their index by an odd constant and
computing the exclusive or of the result with a fixed constant; to
make query times comparable, all structures use the same hash function,
GxHash\footnote{\url{https://crates.io/crates/gxhash}} with a hash seed. \PHastR uses
$R=4$ layouts, with byte seeds for $S=8$ and bit-packed seeds read without
alignment for $S=10$. Construction times include hashing; they are
single-threaded unless otherwise noted, and they are medians of three
constructions up to $10^8$ keys and of a single one on $10^9$ keys. Query
times are medians over nine rounds of $2\cdot10^6$ queries of consecutive
keys of the set, and the rounds of the different structures are interleaved.
Since all structures hash the key first, their memory accesses are random
anyway: reading the keys in order avoids adding to every query the same cache
miss on the array of keys.

We compare \PHastR only with \PHast, \PHastP, and PtrHash~\cite{RagPTR}, the
only structures in this region of the design space. In the extensive
comparison of Beling and Sanders~\cite{BeSPHast,BeSPHastALENEX} on
$5\cdot10^8$ keys, which also includes PTHash~\cite{PTPRFMPH},
PHOBIC~\cite{HLPPPHOBSIC}, SicHash~\cite{LSWSSICTPH}, and structures based on
recursive splitting~\cite{BKLSHPCRBMPHF,LSWSTOSMPHBBF} and on
fingerprinting~\cite{BelFBMPHR}, the configurations of these structures use
between $1.8$ and $3$ bits per key and answer a query in at most about
$50$\,ns, whereas all the other structures need at least $67$\,ns, and those
using less than two bits per key at least $124$\,ns. Moreover, none of these
structures obtains, as \PHastR does, several independent placements of the
keys of a bucket from the bits left over by the computation of the bucket:
they compute a new placement for each seed, as PTHash, PHOBIC, PtrHash and
\PHast do, or they move a single placement, as hash-and-displace and \PHastP
do.

The reference implementation of \PHast and \PHastP is the \texttt{ph} crate
by Beling~\cite{BelBRLPFSDS}: we use
\texttt{Function2} with the seed choosers \texttt{ShiftOnly} for \PHastP,
\texttt{ShiftOnlyProdWrapped<$\delta$>} for \PHastP with wrapping, and
\texttt{SeedOnly(ProdOfValues)} for \PHast. The last two choose among the
feasible seeds by the product of the distances of the slots, as \PHastR does,
whereas \texttt{ShiftOnly} takes the first feasible seed. We applied a fix
to the parallel construction of \texttt{ph}: without it, all threads but the first choose
seeds using wrong bucket indices, and space grows with the number of
threads; single-threaded constructions are not affected.

We also compare with PtrHash~\cite{RagPTR} (version 2.1.2), in its two
minimal configurations: the \emph{default} one, with expected bucket size
$\lambda=3$, a linear assignment of keys to buckets, and a plain array for
the remapping, and the \emph{compact} one, with $\lambda=3.9$, a cubic
assignment of keys to buckets approximating the optimal one of
PHOBIC~\cite{HLPPPHOBSIC}, and an Elias--Fano remapping. Both store
pilots in bytes and use about $1.01n$ slots, so a small fraction of the keys
must be remapped; the default configuration is built as a single part, and
its construction is thus almost sequential, whereas the compact one is built
in parallel parts. We measure the space of PtrHash as the space of its
pilots and of its remapping.

\begin{table}[t]
\centering
\caption{Comparison on $10^7$ keys (single-threaded construction).}
\label{tab:1e7}
\small
\begin{tabular}{llrrr}
\toprule
Method & Configuration & bits/key & build [ns/key] & query [ns] \\
\midrule
\PHastP & $S=8$, $\lambda=5.25$ & 2.117 & 34 & 6.1 \\
\PHastP wrap $\delta=1$ & $S=8$, $\lambda=5.25$ & 2.043 & 53 & 5.4 \\
\PHastP wrap $\delta=2$ & $S=8$, $\lambda=5$ & 2.001 & 56 & 4.6 \\
\PHastP wrap $\delta=3$ & $S=8$, $\lambda=5$ & 1.956 & 67 & 4.5 \\
\PHastP wrap $\delta=5$ & $S=8$, $\lambda=5$ & 1.943 & 86 & 4.4 \\
\PHastP wrap $\delta=7$ & $S=8$, $\lambda=5$ & 1.940 & 107 & 4.4 \\
\PHast & $S=8$, $\lambda=4.5$ & 1.887 & 526 & 3.1 \\
PtrHash & default, $\lambda=3$ & 2.990 & 106 & 3.3 \\
PtrHash & compact, $\lambda=3.9$ & 2.143 & 124 & 5.7 \\
\addlinespace
\PHastR & $S=8$, $L=1024$, $\lambda=4.25$ & 1.943 & 40 & 2.8 \\
\PHastR & $S=8$, $L=1024$, $\lambda=4.5$ & 1.903 & 36 & 3.1 \\
\PHastR & $S=8$, $L=1024$, $\lambda=4.75$ & 1.900 & 35 & 3.6 \\
\midrule
\PHastP & $S=10$, $\lambda=6.1$ & 1.985 & 39 & 5.7 \\
\PHastP wrap $\delta=1$ & $S=10$, $\lambda=6.2$ & 1.891 & 61 & 5.0 \\
\PHastP wrap $\delta=2$ & $S=10$, $\lambda=6.25$ & 1.851 & 86 & 4.8 \\
\PHastP wrap $\delta=3$ & $S=10$, $\lambda=6$ & 1.852 & 92 & 4.6 \\
\PHastP wrap $\delta=5$ & $S=10$, $\lambda=6$ & 1.842 & 134 & 4.5 \\
\PHastP wrap $\delta=7$ & $S=10$, $\lambda=6$ & 1.841 & 175 & 4.5 \\
\PHast & $S=10$, $\lambda=5.8$ & 1.814 & 1610 & 4.5 \\
\addlinespace
\PHastR & $S=10$, $L=2048$, $\lambda=5.75$ & 1.824 & 74 & 3.4 \\
\PHastR & $S=10$, $L=2048$, $\lambda=6$ & 1.823 & 70 & 3.9 \\
\PHastR & $S=10$, $L=2048$, $\lambda=6.25$ & 1.833 & 69 & 4.3 \\
\bottomrule
\end{tabular}
\end{table}

Table~\ref{tab:1e7} shows the results on $10^7$ keys, a size at which the
structures fit in the cache. With 8-bit seeds and
the suggested expected bucket size $\lambda=4.25$ (we motivate this choice
in Section~\ref{sec:large}), \PHastR uses essentially the same space as the
best variant of \PHastP with wrapping, that with $\delta=7$, but it is built in
$37\%$ of the time, and its queries are $36\%$ faster; with $\delta=3$,
\PHastP with wrapping uses $0.013$ bits per key more, and it still takes
$1.7$ times as long to build. The queries of \PHastR are $10\%$ faster also
than those of \PHast, which takes more than thirteen times as long to build.

Queries are fast because \PHastR bumps few keys: just $0.5\%$ of the keys
from the first level, against $4.2\%$ and $3.9\%$ for \PHastP with wrapping
and $\delta=3$ and $\delta=7$, respectively, and the $1.0\%$ of \PHast (with the same expected bucket size,
however, \PHast bumps fewer keys: $0.3\%$ with $\lambda=4.25$), and queries
for bumped keys are much slower than the others. The expected bucket size thus trades space for query
time: with $\lambda=4.5$ \PHastR bumps $1.2\%$ of the keys, and it comes
within $0.02$ bits per key of the space of \PHast, with queries as fast;
with $\lambda=4.75$ it bumps $2.3\%$ of the keys, its space exceeds that of
\PHast by $0.013$ bits per key, and its queries are $16\%$ slower. \PHastP
without wrapping, instead, is slightly faster to build than \PHastR with all
three expected bucket sizes ($34$\,ns per key against $35$--$40$), but it
uses at least $0.17$ bits per key more, and its queries are much slower
($6.1$\,ns against at most $3.6$).

PtrHash, in its default configuration, has queries $18\%$ slower than those
of \PHastR with $\lambda=4.25$, it uses more than one bit per key more, and
it takes more than two and a half times as long to build; its compact
configuration uses $0.2$ bits per key more than \PHastR, and it is $3.1$
times slower to build and twice as slow to query. Note that PtrHash prioritizes
query throughput, and it supports also streams of queries, which it answers
faster than one at a time by prefetching the data needed by the following
keys: batched queries are not among the design goals of \PHastR, and we time
all structures with one query at a time.

With 10-bit seeds
and $\lambda=5.75$ \PHastR uses less space than all variants of \PHastP, and
it is built in $42\%$ of the time of \PHastP with wrapping and $\delta=7$,
and in $81\%$ of the time of that with $\delta=3$; with $\lambda=6$ it comes
within $0.01$ bits per key of the space of \PHast. In both cases its queries
are faster than those of all the other structures with 10-bit seeds.

\begin{table}[t]
\centering
\caption{Comparison on $10^8$ keys, with single-threaded and 8-threaded
construction.}
\label{tab:1e8}
\small
\begin{tabular}{llrrrr}
\toprule
& & & \multicolumn{2}{c}{build [ns/key]} & \\
\cmidrule(lr){4-5}
Method & Configuration & bits/key & 1 thread & 8 threads & query [ns] \\
\midrule
\PHastP & $S=8$, $\lambda=5.25$ & 2.116 & 36 & 8.1 & 13.4 \\
\PHastP wrap $\delta=3$ & $S=8$, $\lambda=5$ & 1.955 & 69 & 11.8 & 11.4 \\
\PHastP wrap $\delta=7$ & $S=8$, $\lambda=5$ & 1.939 & 108 & 16.8 & 11.1 \\
\PHast & $S=8$, $\lambda=4.5$ & 1.885 & 529 & 70.3 & 9.2 \\
PtrHash & default, $\lambda=3$ & 2.990 & 132 & 125.7 & 11.4 \\
PtrHash & compact, $\lambda=3.9$ & 2.143 & 125 & 22.6 & 15.5 \\
\addlinespace
\PHastR & $S=8$, $L=1024$, $\lambda=4.25$ & 1.942 & 41 & 6.1 & 8.8 \\
\PHastR & $S=8$, $L=1024$, $\lambda=4.5$ & 1.902 & 37 & 5.8 & 8.9 \\
\PHastR & $S=8$, $L=1024$, $\lambda=4.75$ & 1.898 & 36 & 5.9 & 9.6 \\
\midrule
\PHastP wrap $\delta=3$ & $S=10$, $\lambda=6$ & 1.847 & 94 & 14.8 & 11.5 \\
\PHastP wrap $\delta=7$ & $S=10$, $\lambda=6$ & 1.836 & 177 & 25.2 & 11.5 \\
\PHast & $S=10$, $\lambda=5.8$ & 1.810 & 1614 & 208.4 & 11.9 \\
\addlinespace
\PHastR & $S=10$, $L=2048$, $\lambda=5.75$ & 1.820 & 75 & 11.1 & 9.6 \\
\PHastR & $S=10$, $L=2048$, $\lambda=6$ & 1.819 & 71 & 10.3 & 9.8 \\
\bottomrule
\end{tabular}
\end{table}

Table~\ref{tab:1e8} shows that the picture does not change on $10^8$ keys:
with 8-bit seeds and
$\lambda=4.25$ \PHastR is built in $37\%$ of the time of \PHastP with
wrapping and $\delta=7$, and its queries are $21\%$ faster; they are also
$4\%$ faster than those of \PHast, and $23\%$ faster than those of PtrHash,
which is built more than three times slower. With 8 threads \PHastR is
built in $36\%$ of the time of \PHastP with wrapping and $\delta=7$ when
seeds have 8 bits, and in $44\%$ of the time when they have 10 bits (with
$\lambda=5.75$); with 8-bit seeds and 8 threads it is built faster also than
\PHastP without wrapping ($6.1$\,ns per key against $8.1$), which is instead
faster with one thread ($36$\,ns per key against $41$). The default
configuration of PtrHash, which is built as a single part, is barely faster
to build with 8 threads ($126$\,ns per key against $132$).

\subsection{Large Key Sets}
\label{sec:large}

\begin{table}[t]
\centering
\caption{Comparison on $10^9$ keys, with single-threaded and 8-threaded
construction.}
\label{tab:1e9}
\small
\begin{tabular}{llrrrr}
\toprule
& & & \multicolumn{2}{c}{build [ns/key]} & \\
\cmidrule(lr){4-5}
Method & Configuration & bits/key & 1 thread & 8 threads & query [ns] \\
\midrule
\PHastP & $S=8$, $\lambda=5.25$ & 2.115 & 41 & 8.9 & 30.5 \\
\PHastP wrap $\delta=3$ & $S=8$, $\lambda=5$ & 1.955 & 74 & 12.9 & 23.7 \\
\PHastP wrap $\delta=7$ & $S=8$, $\lambda=5$ & 1.939 & 113 & 17.9 & 23.2 \\
\PHast & $S=8$, $\lambda=4.5$ & 1.885 & 537 & 70.7 & 17.0 \\
PtrHash & default, $\lambda=3$ & 2.990 & 403 & 393.6 & 16.6 \\
PtrHash & compact, $\lambda=3.9$ & 2.143 & 126 & 23.7 & 29.7 \\
\addlinespace
\PHastR & $S=8$, $L=1024$, $\lambda=4.25$ & 1.942 & 44 & 6.3 & 15.3 \\
\PHastR & $S=8$, $L=1024$, $\lambda=4.5$ & 1.902 & 40 & 6.0 & 16.8 \\
\PHastR & $S=8$, $L=1024$, $\lambda=4.75$ & 1.897 & 39 & 6.1 & 19.2 \\
\midrule
\PHastP wrap $\delta=3$ & $S=10$, $\lambda=6$ & 1.846 & 101 & 16.3 & 25.6 \\
\PHastP wrap $\delta=7$ & $S=10$, $\lambda=6$ & 1.835 & 183 & 26.7 & 25.3 \\
\PHast & $S=10$, $\lambda=5.8$ & 1.810 & 1625 & 208.5 & 25.9 \\
\addlinespace
\PHastR & $S=10$, $L=2048$, $\lambda=5.75$ & 1.819 & 78 & 11.0 & 18.0 \\
\PHastR & $S=10$, $L=2048$, $\lambda=6$ & 1.819 & 74 & 10.7 & 19.9 \\
\bottomrule
\end{tabular}
\end{table}

Table~\ref{tab:1e9} reports the results on $10^9$ keys. Space does not depend
on the number of keys, and the ratios between the structures are similar
to those of smaller key sets: with 8-bit seeds and $\lambda=4.25$, \PHastR
is built in $38\%$ of the time of \PHastP with wrapping and $\delta=7$, and
its queries are $34\%$ faster. The exception is the default configuration of
PtrHash: its construction falls further behind that of \PHastR as the number of keys
grows (with one thread it is $2.7$, $3.3$ and $9.2$ times slower on $10^7$,
$10^8$ and $10^9$ keys, and with eight threads $15$, $21$ and $62$ times
slower), and its queries are now only $8\%$ slower than those of \PHastR, at
the price of more than one bit per key more.
With 10-bit seeds and
$\lambda=6$, \PHastR uses less than $0.01$ bits per key more than \PHast,
but it is built almost $22$ times faster, and its queries are $23\%$ faster. At this
scale the cost of bumped keys becomes more visible, and we used it to choose
the suggested expected bucket size.

\begin{table}[t]
\centering
\caption{Query time in nanoseconds for the keys of the first level
(``first'') and for all keys, measured in a separate run (medians over nine
rounds of $4\cdot10^6$ queries), so the times for all keys differ slightly
from those of Tables~\ref{tab:1e7} and~\ref{tab:1e9}; the second column is
the percentage of keys bumped from the first level. We omit $10^8$ keys:
at that size a structure is about as large as the L3 cache of a group of
eight cores, so query times depend on the benchmark.}
\label{tab:split}
\small
\begin{tabular}{lrrrrr}
\toprule
& & \multicolumn{2}{c}{$10^7$ keys} & \multicolumn{2}{c}{$10^9$ keys} \\
\cmidrule(lr){3-4}\cmidrule(lr){5-6}
Structure & bumped & first & all & first & all \\
\midrule
\PHastP wrap $\delta=3$, $\lambda=5$ & 4.2 & 2.3 & 4.1 & 14.3 & 23.5 \\
\PHast, $\lambda=4.5$ & 1.0 & 2.4 & 2.8 & 14.5 & 16.9 \\
\PHastR, $\lambda=4.25$ & 0.5 & 2.4 & 2.6 & 13.9 & 15.2 \\
\PHastR, $\lambda=4.5$ & 1.2 & 2.3 & 2.9 & 13.8 & 16.7 \\
\bottomrule
\end{tabular}
\end{table}

\begin{figure}[tp]
\centering
\begin{tikzpicture}
\begin{axis}[
  width=0.49\textwidth, height=0.33\textwidth, title style={font=\small},
  title={(a) queries, $10^{7}$ keys},
  xlabel={space [bits/key]}, ylabel={query time [ns]}, xmin=1.78, xmax=2.25, enlarge x limits=false, enlarge y limits=0.08, xtick={1.8,1.9,2,2.1}, extra x ticks={2.22}, extra x tick labels={$2.99$},
  grid=major, grid style={gray!20}, legend to name=leg:pareto, legend columns=3, legend cell align=left, legend style={font=\small, /tikz/every even column/.append style={column sep=0.3em}}]
\addplot[mark=*, color=black, mark size=1.8pt] coordinates { (2.123,5.7) (2.117,6.1) (2.121,6.6) };
\addplot[mark=diamond*, color=black, mark size=1.8pt] coordinates { (1.986,3.6) (1.964,4.0) (1.956,4.5) (1.966,5.0) (1.979,5.8) };
\addplot[mark=star, color=black, mark size=1.8pt] coordinates { (1.946,3.9) (1.940,4.4) (1.951,4.9) };
\addplot[mark=square*, color=blue!70!black, mark size=1.8pt] coordinates { (1.923,2.8) (1.887,3.1) (1.892,3.6) };
\addplot[mark=triangle*, color=red!70!black, mark size=1.8pt] coordinates { (2.035,2.7) (1.943,2.8) (1.903,3.1) (1.900,3.6) (1.909,4.2) };
\addplot[mark=o, color=black, mark size=1.8pt] coordinates { (1.992,5.4) (1.985,5.7) (1.993,6.2) };
\addplot[mark=diamond, color=black, mark size=1.8pt] coordinates { (1.894,3.9) (1.858,4.2) (1.852,4.6) (1.862,5.0) (1.873,5.5) };
\addplot[mark=x, color=black, mark size=1.8pt] coordinates { (1.845,4.1) (1.841,4.5) (1.852,4.9) };
\addplot[mark=square, color=blue!70!black, mark size=1.8pt] coordinates { (1.836,4.2) (1.814,4.5) (1.821,4.9) };
\addplot[mark=triangle, color=red!70!black, mark size=1.8pt] coordinates { (1.856,3.3) (1.824,3.4) (1.823,3.9) (1.833,4.3) (1.850,4.8) };
\addplot[only marks, mark=pentagon*, color=green!50!black, mark size=1.8pt] coordinates { (2.220,3.3) };
\addplot[only marks, mark=pentagon, color=green!50!black, mark size=1.8pt] coordinates { (2.143,5.7) };
\draw[densely dashed, gray] (rel axis cs:0.8723,0) -- (rel axis cs:0.8723,1);
\legend{\PHastP ($S=8$), \PHastP wrap $\delta=3$ ($S=8$), \PHastP wrap $\delta=7$ ($S=8$), \PHast ($S=8$), \PHastR ($S=8$), \PHastP ($S=10$), \PHastP wrap $\delta=3$ ($S=10$), \PHastP wrap $\delta=7$ ($S=10$), \PHast ($S=10$), \PHastR ($S=10$), PtrHash default, PtrHash compact}
\end{axis}
\end{tikzpicture}%
\begin{tikzpicture}
\begin{axis}[
  width=0.49\textwidth, height=0.33\textwidth, title style={font=\small},
  title={(b) construction, $10^{7}$ keys},
  xlabel={space [bits/key]}, ylabel={construction time [ns/key]}, ymode=log, log ticks with fixed point, ytick={20,50,100,200,500,1000,2000}, xmin=1.78, xmax=2.25, enlarge x limits=false, enlarge y limits=0.08, xtick={1.8,1.9,2,2.1}, extra x ticks={2.22}, extra x tick labels={$2.99$},
  grid=major, grid style={gray!20}]
\addplot[mark=*, color=black, mark size=1.8pt] coordinates { (2.123,34.1) (2.117,34.4) (2.121,34.9) };
\addplot[mark=diamond*, color=black, mark size=1.8pt] coordinates { (1.986,65.1) (1.964,65.9) (1.956,67.1) (1.966,68.2) (1.979,69.6) };
\addplot[mark=star, color=black, mark size=1.8pt] coordinates { (1.946,104.7) (1.940,106.7) (1.951,109.0) };
\addplot[mark=square*, color=blue!70!black, mark size=1.8pt] coordinates { (1.923,569.1) (1.887,526.1) (1.892,500.4) };
\addplot[mark=triangle*, color=red!70!black, mark size=1.8pt] coordinates { (2.035,46.1) (1.943,39.7) (1.903,36.0) (1.900,34.8) (1.909,35.3) };
\addplot[mark=o, color=black, mark size=1.8pt] coordinates { (1.992,38.5) (1.985,38.9) (1.993,39.4) };
\addplot[mark=diamond, color=black, mark size=1.8pt] coordinates { (1.894,86.5) (1.858,89.6) (1.852,91.5) (1.862,93.6) (1.873,95.6) };
\addplot[mark=x, color=black, mark size=1.8pt] coordinates { (1.845,169.8) (1.841,175.1) (1.852,178.8) };
\addplot[mark=square, color=blue!70!black, mark size=1.8pt] coordinates { (1.836,1741.6) (1.814,1609.7) (1.821,1529.3) };
\addplot[mark=triangle, color=red!70!black, mark size=1.8pt] coordinates { (1.856,84.1) (1.824,74.3) (1.823,69.7) (1.833,68.6) (1.850,68.1) };
\addplot[only marks, mark=pentagon*, color=green!50!black, mark size=1.8pt] coordinates { (2.220,106.4) };
\addplot[only marks, mark=pentagon, color=green!50!black, mark size=1.8pt] coordinates { (2.143,123.7) };
\draw[densely dashed, gray] (rel axis cs:0.8723,0) -- (rel axis cs:0.8723,1);
\end{axis}
\end{tikzpicture}\\[1ex]
\begin{tikzpicture}
\begin{axis}[
  width=0.49\textwidth, height=0.33\textwidth, title style={font=\small},
  title={(c) queries, $10^{8}$ keys},
  xlabel={space [bits/key]}, ylabel={query time [ns]}, xmin=1.78, xmax=2.25, enlarge x limits=false, enlarge y limits=0.08, xtick={1.8,1.9,2,2.1}, extra x ticks={2.22}, extra x tick labels={$2.99$},
  grid=major, grid style={gray!20}]
\addplot[mark=*, color=black, mark size=1.8pt] coordinates { (2.123,13.0) (2.116,13.4) (2.120,14.2) };
\addplot[mark=diamond*, color=black, mark size=1.8pt] coordinates { (1.985,9.9) (1.962,10.5) (1.955,11.4) (1.964,11.6) (1.978,12.3) };
\addplot[mark=star, color=black, mark size=1.8pt] coordinates { (1.944,10.2) (1.939,11.1) (1.950,11.6) };
\addplot[mark=square*, color=blue!70!black, mark size=1.8pt] coordinates { (1.922,8.9) (1.885,9.2) (1.889,9.5) };
\addplot[mark=triangle*, color=red!70!black, mark size=1.8pt] coordinates { (2.035,9.1) (1.942,8.8) (1.902,8.9) (1.898,9.6) (1.908,10.4) };
\addplot[mark=o, color=black, mark size=1.8pt] coordinates { (1.988,13.0) (1.981,14.4) (1.988,14.6) };
\addplot[mark=diamond, color=black, mark size=1.8pt] coordinates { (1.890,11.6) (1.853,11.2) (1.847,11.5) (1.856,12.5) (1.867,12.4) };
\addplot[mark=x, color=black, mark size=1.8pt] coordinates { (1.840,11.1) (1.836,11.5) (1.846,12.2) };
\addplot[mark=square, color=blue!70!black, mark size=1.8pt] coordinates { (1.834,11.8) (1.810,11.9) (1.816,12.5) };
\addplot[mark=triangle, color=red!70!black, mark size=1.8pt] coordinates { (1.854,9.2) (1.820,9.6) (1.819,9.8) (1.829,10.5) (1.846,11.0) };
\addplot[only marks, mark=pentagon*, color=green!50!black, mark size=1.8pt] coordinates { (2.220,11.4) };
\addplot[only marks, mark=pentagon, color=green!50!black, mark size=1.8pt] coordinates { (2.143,15.5) };
\draw[densely dashed, gray] (rel axis cs:0.8723,0) -- (rel axis cs:0.8723,1);
\end{axis}
\end{tikzpicture}%
\begin{tikzpicture}
\begin{axis}[
  width=0.49\textwidth, height=0.33\textwidth, title style={font=\small},
  title={(d) construction, $10^{8}$ keys},
  xlabel={space [bits/key]}, ylabel={construction time [ns/key]}, ymode=log, log ticks with fixed point, ytick={20,50,100,200,500,1000,2000}, xmin=1.78, xmax=2.25, enlarge x limits=false, enlarge y limits=0.08, xtick={1.8,1.9,2,2.1}, extra x ticks={2.22}, extra x tick labels={$2.99$},
  grid=major, grid style={gray!20}]
\addplot[mark=*, color=black, mark size=1.8pt] coordinates { (2.123,35.8) (2.116,35.6) (2.120,36.0) };
\addplot[mark=diamond*, color=black, mark size=1.8pt] coordinates { (1.985,66.9) (1.962,68.3) (1.955,68.7) (1.964,69.9) (1.978,71.3) };
\addplot[mark=star, color=black, mark size=1.8pt] coordinates { (1.944,106.4) (1.939,108.4) (1.950,110.5) };
\addplot[mark=square*, color=blue!70!black, mark size=1.8pt] coordinates { (1.922,572.2) (1.885,528.7) (1.889,502.2) };
\addplot[mark=triangle*, color=red!70!black, mark size=1.8pt] coordinates { (2.035,46.8) (1.942,40.6) (1.902,37.1) (1.898,35.7) (1.908,35.7) };
\addplot[mark=o, color=black, mark size=1.8pt] coordinates { (1.988,40.7) (1.981,41.5) (1.988,41.5) };
\addplot[mark=diamond, color=black, mark size=1.8pt] coordinates { (1.890,89.3) (1.853,91.5) (1.847,93.5) (1.856,95.8) (1.867,96.8) };
\addplot[mark=x, color=black, mark size=1.8pt] coordinates { (1.840,171.9) (1.836,176.7) (1.846,180.8) };
\addplot[mark=square, color=blue!70!black, mark size=1.8pt] coordinates { (1.834,1751.1) (1.810,1614.4) (1.816,1532.5) };
\addplot[mark=triangle, color=red!70!black, mark size=1.8pt] coordinates { (1.854,85.5) (1.820,75.3) (1.819,70.6) (1.829,69.1) (1.846,68.8) };
\addplot[only marks, mark=pentagon*, color=green!50!black, mark size=1.8pt] coordinates { (2.220,132.2) };
\addplot[only marks, mark=pentagon, color=green!50!black, mark size=1.8pt] coordinates { (2.143,125.4) };
\draw[densely dashed, gray] (rel axis cs:0.8723,0) -- (rel axis cs:0.8723,1);
\end{axis}
\end{tikzpicture}\\[1ex]
\begin{tikzpicture}
\begin{axis}[
  width=0.49\textwidth, height=0.33\textwidth, title style={font=\small},
  title={(e) queries, $10^{9}$ keys},
  xlabel={space [bits/key]}, ylabel={query time [ns]}, xmin=1.78, xmax=2.25, enlarge x limits=false, enlarge y limits=0.08, xtick={1.8,1.9,2,2.1}, extra x ticks={2.22}, extra x tick labels={$2.99$},
  grid=major, grid style={gray!20}]
\addplot[mark=*, color=black, mark size=1.8pt] coordinates { (2.122,28.4) (2.115,30.5) (2.120,32.7) };
\addplot[mark=diamond*, color=black, mark size=1.8pt] coordinates { (1.984,19.6) (1.962,21.6) (1.955,23.7) (1.964,25.9) (1.978,28.7) };
\addplot[mark=star, color=black, mark size=1.8pt] coordinates { (1.944,21.1) (1.939,23.2) (1.950,25.6) };
\addplot[mark=square*, color=blue!70!black, mark size=1.8pt] coordinates { (1.922,15.4) (1.885,17.0) (1.889,19.6) };
\addplot[mark=triangle*, color=red!70!black, mark size=1.8pt] coordinates { (2.034,14.8) (1.942,15.3) (1.902,16.8) (1.897,19.2) (1.908,22.0) };
\addplot[mark=o, color=black, mark size=1.8pt] coordinates { (1.987,28.9) (1.981,30.7) (1.987,33.0) };
\addplot[mark=diamond, color=black, mark size=1.8pt] coordinates { (1.889,22.5) (1.852,23.7) (1.846,25.6) (1.855,27.8) (1.867,30.3) };
\addplot[mark=x, color=black, mark size=1.8pt] coordinates { (1.840,23.3) (1.835,25.3) (1.845,27.4) };
\addplot[mark=square, color=blue!70!black, mark size=1.8pt] coordinates { (1.833,24.3) (1.810,25.9) (1.816,27.9) };
\addplot[mark=triangle, color=red!70!black, mark size=1.8pt] coordinates { (1.853,17.0) (1.819,18.0) (1.819,19.9) (1.829,22.3) (1.846,24.7) };
\addplot[only marks, mark=pentagon*, color=green!50!black, mark size=1.8pt] coordinates { (2.220,16.6) };
\addplot[only marks, mark=pentagon, color=green!50!black, mark size=1.8pt] coordinates { (2.143,29.7) };
\draw[densely dashed, gray] (rel axis cs:0.8723,0) -- (rel axis cs:0.8723,1);
\end{axis}
\end{tikzpicture}%
\begin{tikzpicture}
\begin{axis}[
  width=0.49\textwidth, height=0.33\textwidth, title style={font=\small},
  title={(f) construction, $10^{9}$ keys},
  xlabel={space [bits/key]}, ylabel={construction time [ns/key]}, ymode=log, log ticks with fixed point, ytick={20,50,100,200,500,1000,2000}, xmin=1.78, xmax=2.25, enlarge x limits=false, enlarge y limits=0.08, xtick={1.8,1.9,2,2.1}, extra x ticks={2.22}, extra x tick labels={$2.99$},
  grid=major, grid style={gray!20}]
\addplot[mark=*, color=black, mark size=1.8pt] coordinates { (2.122,40.4) (2.115,40.7) (2.120,41.1) };
\addplot[mark=diamond*, color=black, mark size=1.8pt] coordinates { (1.984,71.5) (1.962,72.9) (1.955,74.0) (1.964,75.0) (1.978,76.1) };
\addplot[mark=star, color=black, mark size=1.8pt] coordinates { (1.944,111.6) (1.939,113.4) (1.950,115.6) };
\addplot[mark=square*, color=blue!70!black, mark size=1.8pt] coordinates { (1.922,577.0) (1.885,536.6) (1.889,510.2) };
\addplot[mark=triangle*, color=red!70!black, mark size=1.8pt] coordinates { (2.034,50.2) (1.942,43.6) (1.902,40.1) (1.897,38.6) (1.908,38.4) };
\addplot[mark=o, color=black, mark size=1.8pt] coordinates { (1.987,47.5) (1.981,48.0) (1.987,48.4) };
\addplot[mark=diamond, color=black, mark size=1.8pt] coordinates { (1.889,95.2) (1.852,98.5) (1.846,101.0) (1.855,103.1) (1.867,105.0) };
\addplot[mark=x, color=black, mark size=1.8pt] coordinates { (1.840,177.8) (1.835,182.9) (1.845,186.6) };
\addplot[mark=square, color=blue!70!black, mark size=1.8pt] coordinates { (1.833,1755.7) (1.810,1625.0) (1.816,1539.3) };
\addplot[mark=triangle, color=red!70!black, mark size=1.8pt] coordinates { (1.853,88.7) (1.819,78.4) (1.819,74.5) (1.829,72.6) (1.846,72.7) };
\addplot[only marks, mark=pentagon*, color=green!50!black, mark size=1.8pt] coordinates { (2.220,403.2) };
\addplot[only marks, mark=pentagon, color=green!50!black, mark size=1.8pt] coordinates { (2.143,126.2) };
\draw[densely dashed, gray] (rel axis cs:0.8723,0) -- (rel axis cs:0.8723,1);
\end{axis}
\end{tikzpicture}\\[1ex]
\ref{leg:pareto}
\caption{Space versus query time (left) and single-threaded construction time
(right) on $10^7$, $10^8$ and $10^9$ keys (top to bottom) as the expected
bucket size varies in steps of $0.25$ around the configurations of
Table~\ref{tab:1e7}: from $4$ to $5$ (\PHastR), from $4.5$ to $5.5$
(\PHastP wrap $\delta=3$) and by $0.25$ on each side (the others) with 8-bit
seeds, and from $5.5$ to $6.5$ (\PHastR and \PHastP wrap $\delta=3$) and by
$0.25$ on each side (the others) with 10-bit seeds; the configurations of
each structure are joined in order of expected bucket size. \PHastP with
wrapping is shown only with $\delta=3$ and $\delta=7$, and the two
configurations of PtrHash are single points; the space axis is broken at the
dashed line, beyond which the default configuration of PtrHash, which uses
$2.99$ bits per key, is drawn. Each point is a configuration: lower and
further to the left is better.}
\label{fig:pareto}
\end{figure}

\paragraph{Trade-offs.}
Figure~\ref{fig:pareto} shows what happens when the expected bucket size
varies around the configurations of the tables. Space has a minimum, query
time grows with $\lambda$, as more keys are bumped, and construction time
varies by at most $15\%$, except for \PHastR, whose construction becomes
slower as $\lambda$ decreases (by up to $33\%$). The curves of \PHastR are
slightly to the right of those of \PHast (their minima differ by about
$0.01$ bits per key), but their construction times differ by more than a
factor of ten; at about the same space, with 8-bit seeds the queries of the
two structures are about as fast, whereas with 10-bit seeds the curves of
\PHastR are below those of \PHast, as its queries are faster.

The curves of \PHastP with
wrapping are to the right of both, even with $\delta=7$. Considering the
three dimensions at once, almost every configuration of \PHastP, with or
without wrapping (also with $\delta=1$, $2$ and $5$, which are not plotted),
is dominated by a configuration of \PHastR with less space, faster queries
and a faster construction, and every configuration of \PHast or \PHastP
that is not dominated wins in one dimension by a small margin, but loses in
another by a large one: some configurations of \PHast use $0.01$--$0.02$
bits per key less, or have queries less than $1$\,ns faster, but they take
more than ten times as long to build; \PHastP with wrapping, $\delta=1$ and
10-bit seeds uses less than $0.01$ bits per key less than \PHastR with 8-bit
seeds and $\lambda=4.75$, but it is built more than $70\%$ slower, and its
queries are more than $20\%$ slower; and \PHastP without wrapping, with
8-bit seeds, is built less than $1$\,ns per key faster on $10^7$ and $10^8$
keys, but it uses more than $0.2$ bits per key more, and its queries are at
least $40\%$ slower.

\paragraph{Bumped keys.}
Table~\ref{tab:split} shows the time of queries for the keys of the first
level only, for four of the configurations of the previous tables. It is
essentially the same for \PHastP with wrapping and for \PHastR, and it grows
with the number of keys, as the array of seeds becomes larger; it is slightly larger for the reference implementation of \PHast,
which explains part of its gap. Besides that, the differences between the
structures are due to bumped keys. In a workload in which all keys are queried, a query
for a bumped key takes about $45$\,ns more than a query for a key of the
first level on $10^7$ keys, and about $220$--$260$\,ns more on $10^9$ keys: a
bumped key reads the seed of another level, and then the remapping, whose
accesses (the inventory of the selection structure, and then the upper bits of the Elias--Fano
sequence) are dependent cache misses. Thus, each percent of bumped keys
costs on average up to half a nanosecond per query on $10^7$ keys, and
more than $2$\,ns on $10^9$ keys: the larger the key set, the more a small
expected bucket size is convenient. This is the reason why we suggest $\lambda=4.25$ (and, for the same reason, $\lambda=5.75$ with 10-bit seeds): with
$\lambda=4.5$ \PHastR uses $0.04$ bits per key less, and it is built faster,
but on $10^9$ keys its queries are $10\%$ slower.

\begin{table}[t]
\centering
\caption{\PHastR with $1$, $2$ and $4$ layouts on $10^9$ keys
(single-threaded construction), with the suggested expected bucket size and
with the one that minimizes space for each number of layouts. Queries are
timed with the code used for all numbers of layouts (generic) and, for four
layouts, also with the code that \PHastR uses by default (specialized).}
\label{tab:layouts}
\small
\begin{tabular}{lrrrrr}
\toprule
& & & & \multicolumn{2}{c}{query [ns]} \\
\cmidrule(lr){5-6}
Configuration & bits/key & bumped [\%] & build [ns/key] & generic & specialized \\
\midrule
$S=8$, $L=1024$, $\lambda=4.25$, $R=1$ & 2.060 & 1.79 & 57 & 19.6 & -- \\
$S=8$, $L=1024$, $\lambda=4.25$, $R=2$ & 1.952 & 0.62 & 56 & 16.7 & -- \\
$S=8$, $L=1024$, $\lambda=4.25$, $R=4$ & 1.942 & 0.51 & 44 & 16.5 & 15.3 \\
\addlinespace
$S=8$, $L=1024$, $\lambda=5$, $R=1$ & 1.952 & 4.12 & 48 & 24.7 & -- \\
$S=8$, $L=1024$, $\lambda=4.75$, $R=2$ & 1.906 & 2.38 & 51 & 20.7 & -- \\
$S=8$, $L=1024$, $\lambda=4.75$, $R=4$ & 1.897 & 2.27 & 39 & 20.4 & 19.2 \\
\midrule
$S=10$, $L=2048$, $\lambda=5.75$, $R=1$ & 1.876 & 1.35 & 74 & 22.3 & -- \\
$S=10$, $L=2048$, $\lambda=5.75$, $R=2$ & 1.824 & 0.78 & 74 & 20.9 & -- \\
$S=10$, $L=2048$, $\lambda=5.75$, $R=4$ & 1.819 & 0.73 & 79 & 20.7 & 18.0 \\
\addlinespace
$S=10$, $L=2048$, $\lambda=6.25$, $R=1$ & 1.849 & 2.73 & 67 & 25.6 & -- \\
$S=10$, $L=2048$, $\lambda=6$, $R=2$ & 1.821 & 1.56 & 70 & 22.8 & -- \\
$S=10$, $L=2048$, $\lambda=6$, $R=4$ & 1.819 & 1.52 & 75 & 22.6 & 19.9 \\
\bottomrule
\end{tabular}
\end{table}

\paragraph{Number of layouts.}
Table~\ref{tab:layouts} isolates the effect of layouts, comparing
structures that differ only in their number. With 8-bit seeds and the
suggested expected bucket size, going from one to four layouts reduces the
bumped keys from $1.8\%$ to $0.5\%$ and saves $0.12$ bits per key; queries
become $16\%$ faster with the same code, and another $7\%$ faster with the
code that \PHastR uses by default, which is specialized for four layouts.
The advantage persists when each number of layouts uses the expected bucket
size that minimizes its space: four layouts still save $0.055$ bits per key
with respect to one, and they are faster both to build and to query. Four
layouts are also the fastest to build: only with four layouts do the bits of the
ring of a key fill a single word (Section~\ref{sec:phastr}), and, as for
queries, the construction uses code specialized for this configuration. With 10-bit seeds the
reduction is smaller (from $1.4\%$ to $0.7\%$ of bumped keys, saving $0.06$
bits per key, and $0.03$ when each number of layouts uses its best expected
bucket size), and the construction with four layouts is about $7\%$ slower
than with two.

Self-collisions explain most of the reduction: on $10^7$ keys, with $S=8$ and a single
layout, the buckets containing a self-collision hold $1.28\%$ of the keys, in
agreement with the first-order estimate $\lambda(\lambda+2)/2L\approx
1.30\%$ for Poisson bucket sizes, and they contain $71\%$ of the bumped keys; with two layouts, the
buckets with a self-collision in both layouts hold just $0.03\%$ of the keys,
and with four layouts there are none (the estimate of
Section~\ref{sec:phastr} predicts fewer than one), so the further reduction from two to
four layouts is due to the larger number of independent placements.

\begin{table}[t]
\centering
\caption{Construction on $10^9$ keys ($S=8$): time in nanoseconds per key by
number of threads (each thread runs on a distinct core), speedup with
$64$ threads, and peak memory usage (resident set size) with eight threads in
bytes per key, excluding the keys (not measured for PtrHash). The times were
measured in a separate run, so those with one thread differ slightly from
those of Table~\ref{tab:1e9}.}
\label{tab:threads}
\small
\setlength{\tabcolsep}{3.5pt}
\begin{tabular}{lrrrrrrrrrr}
\toprule
& \multicolumn{8}{c}{threads} & & \\
\cmidrule(lr){2-9}
Structure & 1 & 2 & 4 & 8 & 16 & 32 & 64 & 96 & speedup & bytes/key \\
\midrule
\PHastP, $\lambda=5.25$ & 40.9 & 23.3 & 13.4 & 8.9 & 6.3 & 5.2 & 4.6 & 4.5 & 8.9 & 9.9 \\
\PHastP wrap $\delta=3$, $\lambda=5$ & 74.0 & 39.7 & 21.5 & 12.8 & 8.1 & 5.9 & 5.4 & 4.9 & 13.7 & 9.9 \\
\PHast, $\lambda=4.5$ & 536.6 & 269.5 & 137.2 & 70.7 & 37.3 & 21.0 & 13.5 & 10.8 & 39.7 & 10.1 \\
PtrHash, default & 401.2 & 400.8 & 396.1 & 393.4 & 397.1 & 394.5 & 390.6 & 394.0 & 1.0 & -- \\
PtrHash, compact & 126.5 & 66.6 & 39.0 & 23.7 & 12.2 & 6.8 & 4.2 & 3.3 & 30.1 & -- \\
\addlinespace
\PHastR, $\lambda=4.25$ & 43.4 & 22.6 & 11.7 & 6.4 & 3.7 & 2.5 & 1.9 & 1.8 & 22.8 & 10.7 \\
\bottomrule
\end{tabular}
\end{table}

\paragraph{Construction.}
The construction time per key of \PHastR grows slowly with the number of
keys ($40$, $41$ and $44$\,ns with one thread on $10^7$, $10^8$ and $10^9$
keys, respectively), as parts are distributed into smaller parts that fit
the cache whatever the number of keys (Section~\ref{sec:phastr}); the growth
is partly due to the walk finding the keys of bumped buckets, which randomly
accesses the bit vector of bumped buckets.

Table~\ref{tab:threads} shows how
construction scales with the number of threads: with $64$ threads \PHastR is
built $22.8$ times faster than with one, whereas \PHastP is built $8.9$ times
faster, and \PHastP with wrapping $13.7$ times faster; the construction of
\PHastP stops getting faster at about $4.5$\,ns per key. With all the
cores of the machine, the structure on $10^9$ keys is built in $1.8$\,ns
per key, $2.5$ and $2.7$ times faster than \PHastP without and with
wrapping, respectively; \PHast, whose construction time is dominated by the
search of seeds, has the largest speedup, but it is still six times slower.

The compact configuration of PtrHash scales well, too ($30.1$ times faster
with $64$ threads), but with all threads it is still $1.8$ times slower than
\PHastR; the default one, which is built as a single part, does not scale.
The loss with respect to a linear speedup comes from the lower clock
frequency when many cores are active, from memory contention, and from the time spent by the
kernel serving page faults. The peak memory usage is close to that of
the reference implementation of \PHastP, which sorts hashes in place and
hashes all keys again to find those of bumped buckets.

\section{Offline Construction}
\label{sec:offline}

The construction of Section~\ref{sec:phastr} keeps in memory the hashes of all
keys and, as bumped keys are hashed again, the keys themselves: about $11$
bytes per key besides the keys. To build structures on hundreds of billions
of keys, \PHastR has also an offline construction, which reads the keys just
once, keeps their hashes on disk, needs in memory $256$\,MiB of buffers
while it reads the keys and then less than half a byte per key, and builds
exactly the same structure. 

The external-memory construction of
PTHash~\cite{PTPEMCMPHFP} also builds the same structure as its construction
in memory, but since PTHash processes buckets in order of decreasing size, it
must sort the hashes by bucket on disk and group the buckets by size. Beling
and Sanders~\cite{BeSPHast,BeSPHastALENEX} outline, but do not implement, an external-memory
construction of \PHast, which also sorts the hashes by bucket on disk, and
reads the keys twice, the second time to find the keys of bumped buckets.

In
\PHastR, as in the window of \PHast, buckets enter the sweep in order of
index, so it is sufficient to distribute the hashes into partitions, rather
than sorting them. Distributing the hashes
on disk by their highest bits has been a standard technique of scalable
constructions of static functions and MPHFs since the work of Botelho et al.~\cite{BPZPPHNOS,BZEPHVLKS}, but
these constructions build a separate structure for each partition, whereas
in \PHastR the linear map from hashes to buckets and the sweep in order of
index make it possible to build in this way a single structure, identical to
the one built in memory.

\paragraph{Records.}
The hash $h_i$ of a key at every level depends only on its hash~$h$ and on
its second hash~$h'$ (Section~\ref{sec:phastr}).
Thus, while reading the keys we compute both hashes of every key, and write a
$16$-byte \emph{record} $(h,h')$ to temporary files: the keys are no longer
necessary, and they can be streamed, for example, from a file. The records of
a level are distributed into a power-of-two number of \emph{partitions} by
the upper bits of the hash of the key at the level. Each partition is a sequence of
blocks, which are appended as their buffers fill to one of at most $16$ files,
each containing the blocks of a range of consecutive partitions, and a table
locates the blocks of each partition.

\paragraph{Parts.}
Since $\hi(hB)$ is a nondecreasing function of $h$, the keys of a partition
belong to consecutive buckets, and only the first and the last of these
buckets can contain also keys of other partitions. Offline, the parts
(Section~\ref{sec:phastr}), which sweeps load in increasing order, are made
of consecutive partitions containing together at least $512$ buckets, and a
bucket whose keys lie in two parts belongs to the second one: when a sweep
loads a part, it sets aside the keys found in the part that belong to the
first bucket of the following part, and adds them to that part when it loads
it. Thus, each partition is read once by each sweep, except at the beginning
of a sweep, where the previous part might need to be read too.

\paragraph{Same structure.}
Given the slots already used, the seed chosen for a bucket depends only on
the set of the hashes of its keys, and not on their order: free rotations are
computed by a bitwise \textsc{or}, the cost of a rotation is a sum of
integers read from a table, and ties are broken by a fixed rule
(Section~\ref{sec:phastr}). Since the sweeps see the same
buckets in the same order, the structure built offline is identical to that
built in memory, provided that the chunks of parallel sweeps are the same
(Section~\ref{sec:phastr}). The only difference is in
the detection of duplicate keys (Section~\ref{sec:phastr}): as the keys are no
longer available for a third hash, offline we consider duplicates two keys
with the same record. The two criteria differ only in the presence of two
distinct keys with the same $128$ bits of hashes, in which case both
constructions fail.

\paragraph{Following levels and memory.}
After the sweep of a level, its files are read again, and the records of the
keys of bumped buckets are written to the files of the following
level, or kept in memory if they are at most $4096$: the files of the first
level are written once and read twice, and the following ones are much
smaller. In memory, besides the buffers of the partitions it reads, the sweep of the first level needs its seeds, as bytes when they
have at most eight bits, and its set of used slots, which the sweeps of a
parallel construction update directly, as each sets only the bits of its
slots: with $S=8$ and $\lambda=4.25$, about $1/4.25+1/8\approx0.36$ bytes
per key. Then, the seeds of the first level become those of the structure,
its free slots are kept in an Elias--Fano sequence, and the remapping is
built at the end from the sets of used slots of the following levels, which
are small: from this point on, memory is essentially that of the structure
being built.

\paragraph{Experiments.}
On $10^9$ keys the offline construction allocates at most $0.38$ bytes per
key with one thread and $0.41$ with eight, against $9.8$ and $11.4$ bytes
per key, excluding the keys, of the construction in memory (as measured by a
counting allocator, which counts also memory that is allocated but never
used: the resident set size of the construction in memory with eight
threads is $10.7$ bytes per key, Table~\ref{tab:threads}). It takes $48.4$\,ns per key with one thread and $16.2$
with eight, against $44.2$ and $6.4$ of the construction in memory in the
same run: with one thread the difference is small, but with eight the time
is dominated by the first phase, in which a single thread reads the keys
and writes the records. On
$10^{11}$ keys, the offline construction would need about $40$\,GB of memory
and $1.6$\,TB of disk,
whereas the construction in memory would need about $1$\,TB of memory
besides the keys.

\paragraph{A trillion keys.} We built offline a structure on the $10^{12}$
$64$-bit integers $[0\..10^{12})$, with $S=8$ and $\lambda=4.25$, using the
command-line tool of \texttt{sux} with $48$ threads. We ran this experiment on
an Amazon EC2 \texttt{r8a.metal-24xl} instance, which has the same processor as
the instance of Section~\ref{sec:experiments} and $768$\,GiB of RAM, keeping the
temporary files ($16$\,TB) on a single EBS \texttt{gp3} volume of $20$\,TB,
provisioned with a throughput of $2000$\,MiB/s and $16\,000$ IOPS. Construction
took $9.3$ hours ($33.4$\,ns per key), $3.7$ of which in the first phase, with a
peak resident set size of $463$\,GB ($0.46$ bytes per key). About $2.7\cdot10^4$
pairs of keys have the same $64$-bit hash, which is negligible
(Section~\ref{sec:phastr}): the first level bumped $0.512\%$ of the keys,
against $0.513\%$ on $10^9$ keys in the same setting, and the structure takes
$1.942$ bits per key. Queries on the structure loaded in memory take $48.5$\,ns,
against $15.3$ on $10^9$ keys (Table~\ref{tab:1e9}).

\section{The Limits of the Approach}
\label{sec:limits}

The first level of \PHastR still bumps between $0.5\%$ and $1.2\%$ of the
keys with 8-bit seeds and $\lambda=4.25$ or $\lambda=4.5$, and, once the
seed width and the expected bucket size are fixed, bumping is where the
excess space goes: it is natural to ask how much a more flexible placement
scheme, or a more powerful search, such as the backtracking and the local
search suggested by Beling and Sanders~\cite{BeSPHast,BeSPHastALENEX}, could reduce it. In
this section we argue that on large key sets bumping cannot be avoided, for a
structural reason that applies to every placement scheme of the \PHast
family, and we quantify it: we derive an asymptotic lower bound of about
$0.1\,n/W$ on the expected number of bumped keys, and we prove that, in a
Poisson model, the best placement bumps about $n/2W$ keys, that is, a
fraction $1/2W$ of the keys, where $W$ is the
number of slots in which a key can be placed. For \PHastR this is about one tenth of the keys that it bumps.

In all structures of the \PHast family a key whose slice starts at $\sigma$
can be placed only in its \emph{reach} $[\sigma\..\sigma+W)$, where $W=L$ for
\PHast, for \PHastP with wrapping, and for \PHastR, and $W=L+2^S-2$ for
\PHastP without wrapping. The seed of a bucket chooses the slots of its keys
within their reaches, but the reaches themselves depend only on the hashes of
the keys. This simple constraint is sufficient to force bumping.

\begin{proposition}
\label{prop:bump}
Consider $n$ keys and $n$ slots, and assume that, for some $W\geq 1$, each key
$x$ can be placed only in the slots $[\sigma_x\..\sigma_x+W)$. Let
\[
X(t) = t - \bigl|\{\,x \mid \sigma_x < t\,\}\bigr|\qquad (0\leq t\leq n),
\]
and let $D=\max_t X(t)-\min_t X(t)$ be the range of $X$. Then, every injective
placement of a subset of the keys bumps (i.e., leaves out) at least $D-W+1$
keys.
\end{proposition}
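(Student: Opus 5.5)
The plan is to compare, over a suitable interval $[a\..b)$ of positions, the number of keys whose slices start there with the number of slots those keys can reach. The two extremes of $X$ force a surplus of one over the other. Let $a$ and $b$ be values of $t$ at which $X$ attains its maximum and its minimum (in some order), so that $|X(b)-X(a)|=D$. For $a<b$, the difference $X(b)-X(a)=(b-a)-K$ compares the length $b-a$ with the number $K=|\{\,x\mid a\le\sigma_x<b\,\}|$ of keys whose slices start in $[a\..b)$. Every key $x$ can occupy only slots in $[\sigma_x\..\sigma_x+W)$, that is, slots at most $W-1$ positions after $\sigma_x$. We may assume $D\ge W$, as otherwise there is nothing to prove. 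We then split according to whether the minimum comes before or after the maximum.

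\emph{Case 1: the maximum precedes the minimum}, that is, $X(a)=\max X$ and $X(b)=\min X$ with $a<b$. Then $K=(b-a)+D$. Every such key can be placed only in $[a\..b+W-1)$, which contains at most $b-a+W-1$ slots. An injective placement thus places at most $b-a+W-1$ of these $K$ keys, and bumps at least $K-(b-a+W-1)=D-W+1$ of them.

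\emph{Case 2: the minimum precedes the maximum}, that is, $X(a)=\min X$ and $X(b)=\max X$ with $a<b$. Then $K=(b-a)-D$, so there are fewer keys starting in $[a\..b)$ than positions. Here I count empty slots. Consider the slots in $[a+W-1\..b)$; since $D\ge W$ and $D\le b-a$, this interval contains $b-a-W+1>0$ slots, all within $[0\..n)$.
\begin{itemize}
\item A key with $\sigma_x<a$ reaches at most slot $\sigma_x+W-1\le a+W-2$, so it cannot occupy these slots.
\item A key with $\sigma_x\ge b$ obviously cannot occupy them either.
\end{itemize}
Hence only the $K=(b-a)-D$ keys counted above can fill these slots, and at least $(b-a-W+1)-(b-a-D)=D-W+1$ of them stay empty. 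Since there are exactly $n$ keys and $n$ slots, an injective placement leaves exactly as many slots empty as it bumps keys. Thus at least $D-W+1$ keys are bumped.

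The main care point is the off-by-one bookkeeping of the reach bound $\sigma_x+W-1$, together with the use of the equal counts of keys and slots, which converts empty slots into bumped keys in Case~2. With $D\ge W$ assumed, the intervals are nonempty and the result follows in both cases.
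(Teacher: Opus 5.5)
Your proof is correct and follows the paper's argument. Both take the interval between the points where $X$ attains its extremes, and handle the two orderings the same way. If $X$ decreases, you count the keys whose slices start in the interval against the at most $b-a+W-1$ slots they can reach. If $X$ increases, you count the slots of $[a+W-1\..b)$ against the keys that can reach them, and turn free slots into bumped keys because keys and slots are equally many.
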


\begin{proof}
For $0\leq t\leq t'\leq n$ the number of keys $x$ with $t\leq\sigma_x<t'$ is
$t'-t-\bigl(X(t')-X(t)\bigr)$. Only these keys can be placed in the $t'-t-W+1$
slots in $[t+W-1\..t')$ (which, as $W\geq1$, lie in $[t\..t')$), so at least
$X(t')-X(t)-W+1$ of these slots are free.
Symmetrically, these keys can be placed only in the at most $t'-t+W-1$ slots
in $[t\..t'+W-1)$, so at least $X(t)-X(t')-W+1$ of them are bumped. Since there
are as many keys as slots, the number of bumped keys is equal to the number of
free slots, and it is thus at least $\bigl|X(t')-X(t)\bigr|-W+1$ for all
$t\leq t'$. Choosing as $t$ and $t'$ the points at which $X$ attains its
minimum and its maximum, in the appropriate order, completes the proof.
\end{proof}

The proposition is a quantitative version of the (easy) necessity of Hall's
condition~\cite{HalORS} for the bipartite graph connecting each key to the
slots of its reach. In the \PHast family slice starts are computed from a
hash by a multiplication in fixed-point arithmetic, so we can model them as
independent and uniformly distributed. If the $\sigma_x$ are independent and
uniform in $[0\..n)$, then $D=nV_n-1$, where $V_n$ is Kuiper's
statistic~\cite{KuiTCRPC} of the values $\sigma_x/n$, that is, the sum of the
maximum positive and negative deviations of their empirical distribution from
the uniform distribution (the compression of slice starts into
$[0\..n-W+1)$ changes $D$ by at most $W-1$, which does not affect the
asymptotics).

By Donsker's theorem~\cite{BilCPM} and the continuity of the
range, $D/\sqrt n$ converges in distribution to the range of a standard
Brownian bridge (which, by Vervaat's transformation~\cite{VerRBBBE}, has the
same distribution as the maximum of a standard Brownian
excursion~\cite{KenDMBE}). By symmetry, the mean of the range is twice the
mean of the maximum of the bridge, which exceeds $x\geq0$ with probability
$e^{-2x^2}$~\cite{BilCPM}, so it is $2\int_0^\infty
e^{-2x^2}\,dx=\sqrt{\pi/2}$. Since $V_n$ is at most twice the maximum absolute
deviation of the empirical distribution from the uniform distribution, the
Dvoretzky--Kiefer--Wolfowitz inequality~\cite{MasTCDKWI} shows that the tails
of $D/\sqrt n$ are sub-Gaussian, uniformly in $n$; thus, the convergence
extends to expectations, and we conclude that
\[
\mathbf E[D]=\sqrt{\frac{\pi n}2}\bigl(1+o(1)\bigr).
\]

\begin{table}[t]
\centering
\caption{Percentage of keys that every placement must bump when each key can
be placed only in the $W=1024$ slots of its slice, for the slice starts of the
first level of \PHastR with $S=8$ on our key sets: the bound $D-W+1$ of
Proposition~\ref{prop:bump}, the minimum (computed by the greedy placement of
Proposition~\ref{prop:greedy}),
and the value $n/2W$ given by the Poisson model of
Proposition~\ref{prop:poisson}, compared with the keys bumped by \PHastR with
$\lambda=4.25$.}
\label{tab:floor}
\small
\begin{tabular}{lrrrr}
\toprule
Keys & $D-W+1$ & minimum & $n/2W$ & \PHastR \\
\midrule
$10^7$ & 0.023 & 0.046 & 0.049 & 0.51 \\
$10^8$ & 0.013 & 0.053 & 0.049 & 0.52 \\
$10^9$ & 0.003 & 0.050 & 0.049 & 0.51 \\
\bottomrule
\end{tabular}
\end{table}

\begin{figure}[t]
\centering
\begin{tikzpicture}[x=3.8mm,y=4mm,font=\footnotesize,>={Latex[length=1.3mm]}]
\fill[gray!12] (7,0.6) rectangle (14,10.95);
\fill[gray!12] (22,0.6) rectangle (28,10.95);
\node[left] at (-.2,11.6) {slots};
\foreach \i in {0,...,31} \draw (\i,11.2) rectangle ++(1,.8);
\draw[decorate,decoration={brace,amplitude=4pt}] (7,12.2) -- (17,12.2) node[midway,above=4pt,font=\scriptsize] {10 slots};
\draw[decorate,decoration={brace,amplitude=4pt}] (22,12.2) -- (31,12.2) node[midway,above=4pt,font=\scriptsize] {9 slots};
\node[left] at (-.2,8.15) {reaches};
\foreach \s/\y in {0/10.4,1/9.7,14/6.2,15/10.4,19/10.4,28/7.6}
  \draw[gray!60,fill=gray!8] (\s+.08,\y) rectangle (\s+3.92,\y+.45);
\foreach \s/\y in {7/9.7,7/10.4,8/7.6,8/8.3,8/9,9/6.2,9/6.9,10/5.5,11/10.4,12/8.3,12/9,12/9.7,13/6.9,13/7.6,22/8.3,22/9,22/9.7,23/10.4,24/7.6,25/5.5,25/6.2,25/6.9,26/9,26/9.7,27/8.3,27/10.4}
  \draw[fill=gray!55] (\s+.08,\y) rectangle (\s+3.92,\y+.45);
\node[font=\scriptsize] at (10.5,5.05) {14 keys};
\node[font=\scriptsize] at (25,5.05) {12 keys};
\node[left] at (-.2,3) {$X(t)$};
\draw[densely dotted,gray] (0,2.5) -- (32,2.5);
\draw[thick] (0,2.5) -- (1,2.5) -- (2,2.5) -- (3,3) -- (4,3.5) -- (5,4) -- (6,4.5) -- (7,5) -- (8,4.5) -- (9,3.5) -- (10,3) -- (11,3) -- (12,3) -- (13,2) -- (14,1.5) -- (15,1.5) -- (16,1.5) -- (17,2) -- (18,2.5) -- (19,3) -- (20,3) -- (21,3.5) -- (22,4) -- (23,3) -- (24,3) -- (25,3) -- (26,2) -- (27,1.5) -- (28,1) -- (29,1) -- (30,1.5) -- (31,2) -- (32,2.5);
\node[font=\scriptsize] at (0,0.15) {$0$};
\node[font=\scriptsize] at (7,0.15) {$7$};
\node[font=\scriptsize] at (14,0.15) {$14$};
\node[font=\scriptsize] at (22,0.15) {$22$};
\node[font=\scriptsize] at (28,0.15) {$28$};
\node[font=\scriptsize] at (32,0.15) {$32$};
\node[font=\scriptsize] at (10.5,-.6) {$X(14)-X(7)=-7$};
\node[font=\scriptsize] at (25,-.6) {$X(28)-X(22)=-6$};
\draw[|-|] (32.8,5) -- node[right,font=\scriptsize] {$D=8$} (32.8,1);
\end{tikzpicture}
\caption{\label{fig:intervals}Two disjoint intervals of slots that force
bumping separately, with $32$ keys, $32$ slots, and reaches of $W=4$ slots
(the slice starts are random, but selected so as to show two large dips of
$X$). Top: the slots; middle: the reaches of the keys, each drawn as a bar from
its first to its last slot (in several lines, for readability), darker for
the keys whose reach starts in one of the intervals $[7\..14)$ and
$[22\..28)$ (shaded); bottom: $X$, with some values of $t$ below. On the two
intervals $X$ decreases by $7$ and $6$: the reaches of $14$ and $12$ keys,
respectively, start in them, and these keys can be placed only in the $10$
and $9$ slots indicated by the braces, so at least $4$ and $3$ of them,
respectively, are bumped. Thus, every placement bumps at least $7$ keys,
which is the minimum (Proposition~\ref{prop:greedy}), whereas
Proposition~\ref{prop:bump} guarantees only $D-W+1=5$ bumped keys, as the
range of $X$ is $D=8$.}
\end{figure}

Proposition~\ref{prop:bump} uses a single interval of slots, the one between
the points at which $X$ attains its minimum and its maximum, and thus it
accounts only for the largest imbalance between slots and keys: it cannot
exceed the range of $X$, which grows only as $\sqrt n$, and it is a good
estimate only as long as $\sqrt{\pi n/2}$ is not much larger than $W$, that
is, up to a few million keys when $W=1024$. However, the argument of its
proof applies to every interval of slots $[t\..t')$, and to every placement:
if $X$ increases by $\Delta$ on $[t\..t')$, at least $\Delta-W+1$ slots in
$[t\..t')$ are free, and if $X$ decreases by $\Delta$, at least
$\Delta-W+1$ keys whose reach starts in $[t\..t')$ are bumped. When $X$
oscillates, several intervals have a large imbalance, and each forces its own
free slots or bumped keys (Figure~\ref{fig:intervals}). If the intervals of
slots $[t_0\..t'_0)$, $[t_1\..t'_1)$, $\ldots$ are pairwise disjoint, a free
slot lies in at most one of them, and the reach of a bumped key starts in at
most one of them, so the free slots and the bumped keys counted for
different intervals are distinct: the bumped keys are at least the sum of
the bounds given by the intervals on which $X$ decreases, and the free slots,
which are as many, at least the sum of the bounds given by the intervals on
which $X$ increases. Thus, at least
\[
\frac12\sum_i\Bigl(\bigl|X(t'_i)-X(t_i)\bigr|-W+1\Bigr)^+
\]
keys are bumped, where $x^+=\max(x,0)$. The increment of $X$ on an interval
of $d$ slots is asymptotically normal with mean zero and variance $d$,
so applying the bound to the $n/d$ consecutive intervals of $d$ slots
shows that, for large $W$, the expected number of bumped keys is at least
about $\bigl(n/2d\bigr)\mathbf E\bigl[(\sqrt d\,|Z|-W)^+\bigr]$, where $Z$
is a standard normal variable. Letting $d=W^2/a^2$, this is
$a\bigl(\varphi(a)-a(1-\Phi(a))\bigr)n/W$, where $\varphi$ and $\Phi$ are
the density and the distribution function of $Z$; its maximum, attained for
$a\approx0.61$, that is, for intervals of about $2.7W^2$ slots, is about
\[
0.1\,\frac nW,
\]
a constant fraction of the keys, however large the key set.

The minimum number of bumped keys can actually be computed exactly:

\begin{proposition}
\label{prop:greedy}
Under the assumptions of Proposition~\ref{prop:bump}, the minimum number of
keys bumped by an injective placement is the number of keys bumped by the
following greedy placement: for each slot, in increasing order, place in the
slot the unplaced key with the smallest $\sigma_x$ among those whose
reach contains the slot, if any (breaking ties arbitrarily).
\end{proposition}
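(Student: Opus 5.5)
We will prove optimality of the greedy placement by an exchange argument. The reaches are intervals of the same length $W$, so ordering keys by $\sigma_x$ also orders them by the right endpoints $\sigma_x+W-1$ of their reaches. This is the classical setting of scheduling unit jobs with release times and deadlines, where earliest-deadline-first maximizes the number of scheduled jobs. Since the number of bumped keys is $n$ minus the size of the placement, it suffices to show that the greedy placement places a maximum number of keys.

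\textbf{Exchange argument.} Let $G$ be the greedy placement and let $P$ be any injective placement (a partial matching in the bipartite graph of keys and reaches). We transform $P$ slot by slot, in increasing order, into a placement agreeing with $G$ on all slots processed so far, without decreasing its size. Suppose $P$ agrees with $G$ on all slots before $p$. Let $A$ be the set of keys that are unplaced by $G$ in slots before $p$ and whose reach contains $p$; this set is the same for $P$, since the two placements coincide before $p$. There are two cases.
\begin{itemize}
\item If $A$ is empty, then $G$ leaves $p$ free. In this case $P$ must also leave $p$ free: any key that $P$ places at $p$ has a reach containing $p$, and it is not placed before $p$ by $P$, hence not by $G$ either, so it would belong to $A$.
\item Otherwise, $G$ places at $p$ a key $x\in A$ with minimum $\sigma_x$. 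If $P$ places $x$ at $p$, there is nothing to do. If $P$ places some $y\neq x$ at $p$ (so $y\in A$ and $\sigma_x\le\sigma_y$) and $x$ at a slot $q>p$, we swap $x$ and $y$. The swap is valid because $y$'s reach contains $p$ and $q$, hence it contains $q$. Indeed, $q\le\sigma_x+W-1\le\sigma_y+W-1$, and $q>p\ge\sigma_y$. If $P$ places $y$ at $p$ and does not place $x$, we replace $y$ by $x$ at $p$, which keeps the size unchanged. If $P$ leaves $p$ free, we put $x$ at $p$, removing it from its later slot if $P$ placed it there; the size does not decrease.
\end{itemize}
In every case the new placement agrees with $G$ up to $p$ and has size at least $|P|$. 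After processing all $n$ slots we obtain $G$, so $|G|\ge|P|$.

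\textbf{Main obstacle.} The delicate step is the swap. It is the only place where the equal reach lengths are used: the reach of $y$ must extend to $q$, which holds because $\sigma_y\ge\sigma_x$ means the reach of $y$ ends no earlier than that of $x$. Tie-breaking is harmless, since keys with equal $\sigma$ have identical reaches. We must also check that the set $A$ at slot $p$ is identical for the greedy placement and for the modified $P$; this follows from the invariant that the two placements agree on all earlier slots. Finally, $n$ minus the size of the placement equals the number of bumped keys, which concludes the proof.
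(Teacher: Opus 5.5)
Your proposal is correct and follows essentially the same route as the paper: an induction over the slots that transforms an arbitrary placement into one agreeing with the greedy placement on a growing prefix, without decreasing the number of placed keys. It uses the same case analysis (free slot, same key, empty slot, bumped $x$) and the same exchange inequality $\sigma_y\le p<q\le\sigma_x+W-1\le\sigma_y+W-1$; your remark on earliest-deadline-first scheduling corresponds to the paper's observation that the result is a special case of Glover's theorem for convex bipartite graphs.
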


\begin{proof}
Let $G$ be the greedy placement and $P$ any placement. We show by induction
on $t$ that there is a placement $Q$ that places at least as many keys as $P$
and agrees with $G$ on the slots in $[0\..t)$, that is, places the same keys
in them. For $t=0$ we take $Q=P$; for $t=n$, $Q$ places the same keys as
$G$, which thus places at least as many keys as $P$.

Let $Q$ agree with $G$ on $[0\..t)$. If $G$ leaves slot $t$ free, every key
whose reach contains $t$ is placed by $G$, and thus by $Q$, before $t$, so
$Q$ leaves slot $t$ free, too. Otherwise, let $g$ be the key placed by $G$ in
slot $t$, which is not placed by $Q$ before $t$. If $Q$ places $g$ in slot
$t$, there is nothing to do. If $Q$ leaves slot $t$ free, we move $g$ to
slot $t$ (or add it, if $Q$ bumps it). Otherwise, $Q$ places in slot $t$ a
key $y\neq g$, which is not placed by $G$ before $t$, so $\sigma_g\leq\sigma_y$
by the greedy choice. If $Q$ bumps $g$, we replace $y$ with $g$; if $Q$
places $g$ in a slot $t'>t$, we exchange $g$ and $y$, as $y$ can be placed
in slot $t'$: $\sigma_y\leq t<t'<\sigma_g+W\leq\sigma_y+W$. In all cases the
number of placed keys does not decrease, the slots in $[0\..t)$ are
unchanged, and the new placement agrees with $G$ on $[0\..t]$.
\end{proof}

The greedy placement is a special case of Glover's algorithm for maximum
matchings in convex bipartite graphs~\cite{GloMMCBG}, which assigns each
slot to the unassigned key whose interval ends first (as all reaches have
the same length, the reach that ends first is the one that starts first),
so Proposition~\ref{prop:greedy} is a special case of Glover's theorem; we
have included its short proof for completeness.

Concretely, the greedy placement is a queue that at each slot receives the
keys whose slice starts there, places the oldest waiting key, and bumps the
keys that have waited $W$ slots. Since keys are placed in order of slice
start and all reaches have the same length, a key is placed if and only if
fewer than $W$ keys that precede it in this order are still waiting when it
arrives: we can thus bump keys as they arrive, and describe the queue by the
number $Q_t\in[0\..W)$ of keys waiting at the beginning of slot $t$. If
$A_t$ keys arrive at slot $t$, then $L_t=(Q_t+A_t-W)^+$ of them are bumped,
and
\[
Q_{t+1}=Q_t+A_t-L_t-1+I_t,
\]
where $I_t$ is one if $Q_t+A_t=0$ (i.e., if slot $t$ remains free) and zero
otherwise. On large key sets the number of keys whose slice starts at a given
slot is approximately a Poisson variable with mean one, and the numbers for
different slots are approximately independent, so we model them with
independent Poisson variables:

\begin{proposition}
\label{prop:poisson}
If the $A_t$ are independent Poisson variables with mean one, the expected
number $\ell$ of keys bumped at a slot in the stationary regime of the queue
satisfies
\[
\frac1{2W+e-2}\leq\ell\leq\frac1{2W}.
\]
\end{proposition}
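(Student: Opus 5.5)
The plan is a second-moment (Lindley-style) identity for the stationary queue. Two facts make it workable. The chain $(Q_t)$ lives on the finite set $[0\..W)$ and is irreducible and aperiodic, since $A_t=0$ and $A_t\geq1$ both have positive probability. Hence a unique stationary distribution exists, all moments are finite, and $Q_t$ is independent of $A_t$. Writing $V=Q+A$ and $M=\min(V,W)$, the recursion becomes $Q'=M-1+I$ with $L=V-M=(V-W)^+$ and $I=[V=0]$. Taking expectations in stationarity, $\mathbf E[Q']=\mathbf E[Q]$ gives $\ell=\mathbf E[L]=\mathbf E[I]$. In words, bumped keys and free slots balance, as already observed after Proposition~\ref{prop:bump}.

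Next I would square the recursion. Since $Q'=(M-1)^+$, we have $Q'^2=(M-1)^2-I$, so
\[
\mathbf E[Q^2]=\mathbf E[M^2]-2\mathbf E[M]+1-\ell .
\]
The two terms on the right are handled as follows.
\begin{itemize}
\item For $\mathbf E[M^2]$: when $L>0$ we have $V=L+W$, so $V^2-M^2=L(V+M)=L(L+2W)$, hence $\mathbf E[M^2]=\mathbf E[V^2]-\mathbf E[L^2]-2W\ell$. By independence of $Q$ and $A$, $\mathbf E[V^2]=\mathbf E[Q^2]+2\mathbf E[Q]+2$.
\item For $\mathbf E[M]$: $\mathbf E[M]=\mathbf E[Q]+1-\ell$.
\end{itemize}
Substituting, the terms in $\mathbf E[Q^2]$ and $\mathbf E[Q]$ cancel, and I expect to arrive at the exact identity
\[
2W\ell+\mathbf E\bigl[L(L-1)\bigr]=1 .
\]
The upper bound follows at once, because $L$ is a nonnegative integer and so $L(L-1)\geq0$.

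For the lower bound it suffices to show $\mathbf E[L(L-1)]\leq(e-2)\,\ell$, which gives $(2W+e-2)\ell\geq1$. I would condition on $Q=q$ and set $k=W-q\geq1$. Since $A$ is independent of $Q$, $L=(A-k)^+$ given $Q=q$, and it is enough to prove, for every $k\geq1$,
\[
\mathbf E\bigl[(A-k)^+(A-k-1)^+\bigr]\leq(e-2)\,\mathbf E\bigl[(A-k)^+\bigr],
\]
and then average over $q$.

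For $k=1$ a direct computation with Poisson moments gives $\mathbf E[(A-1)^+]=1/e$ and $\mathbf E[(A-1)(A-2)\,[A\geq2]]=1-2/e$. The ratio is exactly $e-2$, which is where the constant comes from. For $k>1$ I need the ratio $\mathbf E[O(O-1)\mid A>k]\,/\,\mathbf E[O\mid A>k]$, where $O=A-k$ is the overshoot, to be nonincreasing in $k$.

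This monotonicity is the step I expect to be the main obstacle. My first attempt will use the log-concavity of the Poisson distribution. The ratio $p(k+j)/p(k)=k!/(k+j)!$ is decreasing in $k$ for each $j$, so the conditional law of the overshoot given $A>k$ decreases in $k$ in likelihood-ratio order. Likelihood-ratio order implies that the size-biased overshoot also decreases stochastically. The ratio above equals $\mathbf E[\tilde O-1]$ for the size-biased overshoot $\tilde O$, so it is nonincreasing in $k$. If this ordering argument is awkward to state cleanly, the fallback is to compare the two sums term by term using $k!/(k+j)!\leq1/(j+1)!$.
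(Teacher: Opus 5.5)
Your proof is correct, and it follows the paper's overall plan. Your identity $2W\ell+\mathbf E[L(L-1)]=1$ is exactly the paper's $(2W-1)\ell+\mathbf E[L^2]=1$, since $\ell=\mathbf E[L]$. You reach it through $Q'=(\min(Q+A,W)-1)^+$; the paper squares the recursion directly, using $LI=0$ and $Q+A=W+L$ when $L>0$. The upper bound comes from the integrality of $L$ in both. The lower bound is reduced in both, by conditioning on $Q$, to the same per-$q$ inequality, whose extremal case $W-q=1$ produces the constant.

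The real difference is how you prove that inequality for $k=W-q>1$. You show, via likelihood-ratio order of the overshoot and size-biasing, that $\mathbf E[O(O-1)]/\mathbf E[O]$ is nonincreasing in $k$. This is rigorous and gives a slightly stronger, monotone statement.

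The paper instead compares each $k$ directly with $k=1$. From $\Pr(A=k+j+1)/\Pr(A=k+j)=1/(k+j+1)\le1/(j+2)$ it gets $\Pr(A=k+j)\le2\Pr(A=k+1)/(j+1)!$. Since $j(j-e+1)$ is negative only at $j=1$, bounding the positive terms while keeping the negative one exact yields $\sum_{j\ge1}j(j-e+1)\Pr(A=k+j)\le0$ in two lines, with no ordering machinery.

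That is also the correct form of your fallback: the termwise comparison must be normalized at overshoot one, $(k+1)!/(k+j)!\le2/(j+1)!$. Normalized at $\Pr(A=k)$, as in your $k!/(k+j)!\le1/(j+1)!$, it fails. It only bounds the left-hand side by $(e-2)\Pr(A=k)$, whereas $\mathbf E[(A-k)^+]$ can be smaller than $\Pr(A=k)$; for $k=2$, $3/e-1<1/(2e)$.
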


\begin{proof}
The queue is a finite, irreducible and aperiodic Markov chain on $[0\..W)$,
so it has a unique stationary distribution. Let $Q$ have this distribution,
let $A$ be a Poisson variable with mean one independent of $Q$, let
$L=(Q+A-W)^+$, and let $I$ be one if $Q+A=0$ and zero otherwise, so that
$Q+A-L-1+I$ has the same distribution as $Q$. Taking expectations, $\mathbf E[I]=\mathbf E[L]=\ell$.
Since $LI=0$, $(Q+A-1)I=-I$, and $(Q+A-1)L=(W-1)L+L^2$ (as $Q+A=W+L$ when
$L>0$), squaring gives
\[
(Q+A-L-1+I)^2=(Q+A-1)^2-L^2-2(W-1)L-I,
\]
and since $\mathbf E\bigl[(Q+A-1)^2\bigr]=\mathbf E[Q^2]+\operatorname{Var}[A]=\mathbf E[Q^2]+1$, taking
expectations we obtain
\[
(2W-1)\,\ell+\mathbf E[L^2]=1.
\]
As $L$ is a nonnegative integer, $L^2\geq L$, which yields the upper bound.
For the lower bound, we show that $\mathbf E[L^2]\leq(e-1)\,\ell$. Given
$Q=q$, we have $L=(A-m)^+$ with $m=W-q\geq1$; let $p_j=\Pr(A=m+j)$. Since
$p_{j+1}/p_j=1/(m+j+1)\leq1/(j+2)$, we have $p_j\leq2p_1/(j+1)!$ for
$j\geq1$, with equality when $m=1$. As $j(j-e+1)$ is negative only for $j=1$,
\[
\sum_{j\geq1}j(j-e+1)\,p_j\leq p_1\sum_{j\geq1}j(j-e+1)\frac2{(j+1)!}=0,
\]
where the last sum is zero because when $m=1$ we have $\sum_j j\,p_j=\mathbf
E[(A-1)^+]=1/e$ and $\sum_j j^2p_j=\mathbf E\bigl[\bigl((A-1)^+\bigr)^2\bigr]=1-1/e$.
Thus, $\mathbf E[L^2\mid Q=q]\leq(e-1)\,\mathbf E[L\mid Q=q]$, and averaging
over $q$ and substituting into the identity above completes the proof.
\end{proof}

The proof makes precise the intuition that in the stationary regime the
variance added to the length of the queue by the arrivals must be balanced
by the losses at $W$, each of which reduces its second moment by about $2W$.
By the ergodic theorem, in this model the fraction of bumped keys on large
key sets tends to $\ell$, which is essentially $1/2W$: the minimum number of
bumped keys is thus about
\[
\frac n{2W},
\]
that is, $0.049\%$ of the keys when $W=1024$. Table~\ref{tab:floor} shows
that the model is accurate on the slice starts of our key sets: on $10^9$
keys the minimum computed by the greedy placement is $0.050\%$ of the keys,
whereas Proposition~\ref{prop:bump} guarantees $0.003\%$.

In terms of space, this floor is negligible: by~(\ref{eq:space}), bumping a
fraction $\beta\approx1/2W$ of the keys costs about
$\beta(2+\lg(1/\beta))\approx0.006$ bits per key when $W=1024$. Its
importance lies in the fact that it applies to \emph{every} placement:
beyond a few million keys, no amount of search can make a first level with
$m=n$ free of bumping. Consistently, the variant of \PHast without bumping of
the reference implementation (\texttt{NBFunction}) can be built in practice
only with an output range larger than the number of keys, that is, when the
function is not minimal: with $(1+\varepsilon)n$ slots the queue above has a
negative drift, and its losses decrease exponentially with
$\varepsilon W$.\footnote{Without the constraint of the reaches, space lower
bounds for non-minimal perfect hash functions are classical~\cite{FKSSSFPHF},
and tight lower bounds for non-minimal $k$-perfect hash functions have been
given by Groot Koerkamp et al.~\cite{GHSWNMKPHTLBAFSHT}.}

On the other hand, \PHastR bumps about ten times as many keys as the reaches
force: most of its bumping is due to the constraints on the seeds, which
offer at most $2^S-1$ placements of the keys of a bucket, rather than to the
reaches. By~(\ref{eq:space}),
a search reaching the floor would reduce the space of \PHastR with $S=8$ and
$\lambda=4.25$ from $1.94$ to about $1.89$ bits per key: this is the room
that is left for better searches.

The floor also rules out a version of \PHastR without bumping, whose space
would be just the $S/\lambda$ bits per key of the seeds, built by a
backtracking search in the style of CONSENSUS~\cite{LSWCSESAMPH}, which
combines the search for the seeds with their encoding and gets arbitrarily
close to the lower bound for MPHFs: on a large key set every placement bumps
keys, so no search, however clever, can avoid bumping. Such a search could
succeed only if the keys were first split into subsets of exactly known size
$C$, each placed cyclically in its own $C$ slots, small enough that the
reaches do not force bumping: by Proposition~\ref{prop:bump}, as the expected
range of $X$ on a subset is about $\sqrt{\pi C/2}$, this happens only if
$\sqrt{\pi C/2}$ is significantly smaller than $W$. The slots must be used
cyclically, because otherwise those at the ends of a subset can be reached
by few keys: if the slices of the keys start uniformly in $[0\..C-W+1)$, the
slots at each end receive enough keys with probability about $2W/C$, so a
placement without bumping exists with probability at most about
$(2W/C)^2$. Such a split is the first step (albeit for
different reasons) of Bucketed CONSENSUS-RecSplit~\cite{LSWCSESAMPH}, which
uses a minimal $k$-perfect hash function.

Even on such subsets, however, a depth-first search detects its mistakes too
late. Without bumping every slot must be used, so the natural failure test is
that a free slot can no longer receive a key; but this becomes certain only
after all the buckets whose slices can contain the slot, about $W/\lambda$ of
them, have been placed. The choice that caused a failure might thus have been
made about $W/\lambda$ buckets earlier, and a depth-first search with this
test must exhaust the alternatives of all the buckets placed after it before
it can change it. In a prototype that places about $10^5$ keys cyclically on
as many slots and chains seeds as CONSENSUS does, with
$S=8$, $L=512$, $R=4$, and $\lambda=4$ (so $W/\lambda=128$), the search never
got past the first $160$ buckets, even though seeds are more than two bits
longer than the information-theoretic minimum for a bucket, and with a
budget of $2000$ seeds tried per bucket, counting those tried again after
backtracking.

\section{Conclusions}

In \PHastR the lowest bits of a seed choose one of a few layouts, which are
independent of one another,
like the seeds of the pseudorandom scheme of \PHast, and whose offsets come
for free with the computation of the bucket; the remaining bits choose a
rotation, which moves all the keys of the bucket by the same amount, like
the seeds of the additive scheme of \PHastP. Rotations
keep the search bit-parallel, as the free rotations of a layout are found by
cyclic shifts of the bits of the rings of the keys, which are consecutive bits
once the set of used slots is stored by residue classes. Layouts, in turn,
neutralize the self-collisions of the additive scheme, which, even with
wrapping, cause a large part of the bumping of \PHastP, as a bucket is
affected only if it has a self-collision in every layout; and, once the seed
width and the expected bucket size are fixed, bumping is where both the
excess space and the slow queries come from: queries for the keys of the
first level take about the same time in \PHastR and in \PHastP with
wrapping, and almost all the difference is due to bumped keys
(Table~\ref{tab:split}).

As a result, with essentially the same space as the best
variant of \PHastP with wrapping, \PHastR is built in $38\%$ of the time, and
its queries are $34\%$ faster on $10^9$ keys
(Tables~\ref{tab:1e7}--\ref{tab:1e9}); they are faster also than those of
PtrHash, which uses more than one bit per key more; with a larger expected
bucket size,
it comes within $0.02$ bits per key of the space of \PHast, which takes more
than ten times as long to build, and its queries are still about as fast;
with many threads it is built two and a half times faster than \PHastP
(Table~\ref{tab:threads}); and almost every configuration of \PHastP in
Figure~\ref{fig:pareto} is dominated by one of \PHastR. Finally, an offline
construction builds the same structure keeping
the hashes of the keys on disk, with less than half a byte per key of memory
on large key sets.

There are, however, intrinsic limits to how few keys can be
bumped: when each key can be placed only within a range of $W$ slots, every
placement of random keys must bump, on large key sets, a constant fraction
of the keys (in expectation, at least about $0.1/W$ by our asymptotic lower
bound, and about $1/2W$ for the best placement in a Poisson model), although this floor is about ten times lower than the bumping of \PHastR.

There are several open problems and further research directions. On the
practical side, the small final levels could be replaced by a more compact
structure. More interestingly,
\PHastR bumps about ten times as many keys as the reaches force: it is open
whether a search can approach this floor efficiently (which would save about
$0.05$ bits per key), and whether bump-free placements, which would use just
$S/\lambda$ bits per key, can be found efficiently within subsets of exactly
known size, in spite of the long-range constraints that made simple
backtracking fail in our prototype.

\bibliography{biblio}

\end{document}